\theoremstyle{definition}
\newtheorem{experiment}{Experiment}[subsection]
\keywords{fair lending, disparate impact, protected class, racial discrimination, race imputation, probablistic proxy model, Bayesian Improved Surname Geocoding}
\begin{document}

\title[Fairness Under Unawareness]{Fairness Under Unawareness:\\Assessing Disparity When Protected Class Is Unobserved}

\newcommand{\pr}{\mathbb{P}}
\newcommand{\expect}{\mathbb{E}}
\newcommand{\expectn}{\hat{\mathbb{E}}_N}
\newcommand{\var}{\operatorname{Var}}
\newcommand{\cov}{\operatorname{Cov}}
\newcommand{\na}{\text{NA}}
\newcommand{\ind}{\mathbb{I}}
\newcommand{\plim}{\operatorname{plim}}

\hyphenation{di-s-cr-i-mi-na-ti-on}
\hyphenation{ge-o-lo-ca-ti-on}
\hyphenation{in-t-ra-ge-o-lo-ca-ti-on}
\hyphenation{in-te-r-ge-o-lo-ca-ti-on}
\hyphenation{thr-esh-old}
\hyphenation{gr-o-up}
\hyphenation{hi-gh-er}
\hyphenation{thr-esh-old-ed}
\hyphenation{w-ei-ght-ed}

\setlength{\emergencystretch}{1em}


\author{Jiahao Chen}
\affiliation{}
\email{cjiahao@gmail.com}
\orcid{0000-0002-4357-6574}

\author{Nathan Kallus}
\affiliation{%
    \institution{Cornell Tech}
    \streetaddress{2 West Loop Road}
    \city{New York}
    \state{New York}
    \postcode{10044-1501}
    \country{USA}
}
\email{kallus@cornell.edu}

\author{Xiaojie Mao}
\authornote{Corresponding author}
\affiliation{%
    \institution{Cornell Tech}
    \streetaddress{2 West Loop Road}
    \city{New York}
    \state{New York}
    \postcode{10044-1501}
    \country{USA}
}
\email{xm77@cornell.edu}

\author{Geoffry Svacha}
\affiliation{}
\email{svacha@gmail.com}

\author{Madeleine Udell}
\affiliation{%
    \institution{Cornell University}
    \streetaddress{223 Rhodes Hall}
    \city{Ithaca}
    \state{New York}
    \postcode{14853-3801}
    \country{USA}
}
\email{udell@cornell.edu}

\renewcommand{\shortauthors}{J. Chen, N. Kallus, X. Mao, G. Svacha, M. Udell}

\begin{abstract}
Assessing the fairness of a decision making system with respect to a protected
class, such as gender or race, is challenging when class membership labels
are unavailable.
Probabilistic models for predicting the protected class based on observable proxies,
such as surname and geolocation for race, are sometimes used to
impute these missing labels for compliance assessments.
Empirically, these methods are observed to exaggerate disparities, but the
reason why is unknown.
In this paper, we decompose the biases in estimating outcome disparity via
threshold-based imputation into multiple interpretable bias sources,
allowing us to explain when over- or underestimation occurs.
We also propose an alternative weighted estimator that uses soft classification,
and show that its bias arises simply from the conditional covariance of
the outcome with the true class membership.
Finally, we illustrate our results with numerical simulations and a public dataset 
of mortgage applications, using geolocation as a proxy for race.
We confirm that the bias of threshold-based imputation is generally upward, but its
magnitude varies strongly with the threshold chosen.
Our new weighted estimator tends to have a negative bias that is much simpler
to analyze and reason about.
\end{abstract}

\copyrightyear{2019}
\acmYear{2019}
\setcopyright{acmcopyright}
\acmConference[FAT* '19]{FAT* '19: Conference on Fairness, Accountability, and
	Transparency}{January 29--31, 2019}{Atlanta, GA, USA}
\acmBooktitle{FAT* '19: Conference on Fairness, Accountability, and Transparency
	(FAT* '19), January 29--31, 2019, Atlanta, GA, USA}
\acmPrice{15.00}
\acmDOI{10.1145/3287560.3287594}
\acmISBN{978-1-4503-6125-5/19/01}

\begin{CCSXML}
	<ccs2012>
	<concept>
	<concept_id>10003456.10010927.10003611</concept_id>
	<concept_desc>Social and professional topics~Race and ethnicity</concept_desc>
	<concept_significance>500</concept_significance>
	</concept>
	<concept>
	<concept_id>10003456.10010927.10003618</concept_id>
	<concept_desc>Social and professional topics~Geographic characteristics</concept_desc>
	<concept_significance>500</concept_significance>
	</concept>
	<concept>
	<concept_id>10010405.10010406.10010430</concept_id>
	<concept_desc>Applied computing~IT governance</concept_desc>
	<concept_significance>300</concept_significance>
	</concept>
	<concept>
	<concept_id>10010405.10010455.10010458</concept_id>
	<concept_desc>Applied computing~Law</concept_desc>
	<concept_significance>300</concept_significance>
	</concept>
	</ccs2012>
\end{CCSXML}

\ccsdesc[500]{Social and professional topics~Race and ethnicity}
\ccsdesc[500]{Social and professional topics~Geographic characteristics}
\ccsdesc[300]{Applied computing~IT governance}
\ccsdesc[300]{Applied computing~Law}

\maketitle

\section{Introduction}

Models for high stakes decision making have ethical and legal needs to demonstrate a lack of discrimination with respect to protected classes \cite{Munoz2016,Barocas:2016aa}.
Examples of such decisions include employment and compensation \cite{Conway1983,Greene1984}, university admissions \cite{Bickel1975}, and sentence and bail setting \cite{Berk2017,Chouldechova2017,Dressel2018}.
Another example relevant to the financial services industry is credit decisioning \cite{Chen2018}, which is a classification problem where these ethical concerns are enshrined in concrete regulatory compliance requirements.
Credit decisions must be shown to comply with a myriad of federal and state fair lending laws, some of which are summarized in \cite{Chen2018}%
\footnote{In this paper, we restrict our discussion and citation of applicable laws to those of the United States of America.}.
Some of these laws define protected classes, such as race and gender,
where discrimination on the basis of a customer's membership in these classes is prohibited. Table \ref{table:classes} summarizes the protected classes defined by the Fair Housing Act (FHA) \cite{fha} and Equal Credit Opportunity Act (ECOA) \cite{ecoa}. 




\begin{table}[h]

\caption{\label{table:classes} Protected classes defined under fair lending laws.}

\begin{tabular}{l c c}
\toprule
Law & FHA\cite{fha}\! & ECOA\cite{ecoa}\! \\
\midrule
age & & X \\
color & X & X \\
disability & X & \\
exercised rights under CCPA & & X \\
familial status (household composition)\!\!& X & \\
gender identity & X & \\
marital status (single or married)& & X \\
national origin & X & X \\
race  & X & X \\
recipient of public assistance & & X \\
religion & X & X \\
sex & X & X \\
\bottomrule
\end{tabular}

\end{table}

\subsection{Assessing fairness with unknown protected class membership}

When demonstrating that credit decisions comply with these fair lending laws,
we sometimes run into situations where fairness and bias assessments must be done on populations without knowing their memberships in protected classes, because it is illegal or operationally difficult to do so.
For example, credit card and auto loan companies must demonstrate that the way they extend credit is not racially discriminatory, yet are not allowed to ask applicants what race they are when they apply for credit.\footnote{Lenders may ask applicants to self-identify in a voluntary basis, with the understanding that the answer will not affect the outcome of the application and that the information is collected for compliance assessment only \cite[12 CFR \S 1002.5(b)]{regulationb}.}
Similarly, health plans can only solicit race and ethnicity information for new members but cannot obtain the same information for existing members \cite{elliott2008new}.
Given the lack of secure protocols that permit disparity evaluation with encrypted protected classes \cite{kilbertus2018blind}, disparate impact assessments for these situations have to impute the mostly (or entirely) missing labels corresponding to the protected class, usually by relying on observed proxy variables that can predict class memberships.
 The imputed protected classes are then used by regulators in assessing disparate impact (but they are not allowed to be used in decision making).
 Generally, any model that imputes the missing protected attribute value based on other, observed variables is known as a \emph{proxy model}, and such a model that is based on predicting conditional class membership probabilities is known as a \emph{probabilistic proxy model}.


For example, for assessing adverse action with regard to race in credit decisions, regulators like the Bureau of Consumer Financial Protection (BCFP)\footnote{Formerly the Consumer Financial Protection Bureau (CFPB). Citations and references reflect the name at time of publication.}  have been known in the past to use a probabilistic proxy model to impute the customers' unknown race labels \cite{CFPB:proxy2014}.
They used a na\"ive Bayes classifier, the Bayesian Improved Surname Geocoding (BISG) method, to predict the probability of race membership given the customer's surname and address of residence \cite{Elliott:2009aa}.
Specifically, assuming that surname and location are statistically independent given race, BISG uses Bayes's rule to
compute race membership probabilities from
the conditional distributions of surname given race and of location given race as inferred by census data.
This methodology \cite{CFPB:proxy2014} notably supported a \$98 million fine against a major auto loan lender \cite{CFPB:ally2013}.
This case generated some controversy \cite{Koren:2016aa,CFS:2015,CFS:2016,CFS:2017,CFPB:2018-05}, in part due to empirical findings that the amount of disparate impact estimated by BISG appears to overestimate true disparities \cite{Baines:2014aa,zhang2016assessing}.
However, the cause for this overestimation phenomenon is unknown, as is whether overestimation is to be expected always, or whether or not underestimation of disparate impact is also possible.
This observation forms the motivation for our current work, which is broadly applicable to any fairness assessment where an unobserved protected class must be imputed using a proxy model.
The aim is not to criticize the use of proxy models in general, but rather to provide a more informed analysis of the statistical biases inherent in any assessment where membership in protected classes must be imputed.

\paragraph{Main results}

This paper investigates the bias in estimating demographic disparity (Definition \ref{demographic-disparity}) when a proxy model is employed to impute a protected class.
We present the first theoretical results describing when the use of proxy models can lead to biased estimates of outcome disparity,
which explains the overestimates observed in the past, and also offers insights on the practical use of proxy models.
More specifically, our key contributions are:

\begin{enumerate}
\item
We derive the (statistical) bias for the commonly used thresholded estimator, where a label is assigned only if the proxy model predicts a label with probability exceeding a predefined threshold \cite{CFPB:proxy2014,Baines:2014aa,zhang2016assessing} (Definition \ref{def:thresholded_estimator}, Theorem \ref{thm: demographic-est-race-bias}). We decompose its bias into multiple sources, which gives a set of interpretable conditions under which the thresholded estimator can over- or underestimate the outcome disparity.

\item
We present a new weighted estimator for demographic disparity (Definition \ref{def:weighted_estimator}) that uses soft classification based on proxy model outputs as opposed to hard imputation. We derive its bias (Theorem \ref{thm: weighted}) and find that the weighted estimator has only one bias source.


\item
We validate our results on a public mortgage data set, using geolocation as the sole variable in a proxy model for race.
We identify the specific source of bias that can account for the overestimation of the thresholded estimator, which can explain the overestimation bias of using proxy methods observed in previous literature.
\item
We discover that the estimation bias is sensitive to the threshold used in class imputation based on the proxy model. This shows the intrinsic limitation of the thresholded estimator.
\end{enumerate}

\section{Evaluating the fairness of a binary decision}

We have three main variables of interest:

\begin{description}
\item[Binary decision $Y$]
with $Y = 1$ representing a favorable outcome, such as the approval of loan application or college admission offer, and $Y = 0$ representing an unfavorable outcome.

\item[Protected class $A$]
such as gender or race; often, we will 
write $A=a$ for
the \textit{advantaged group} and $A=b$ for the
\textit{disadvantaged group}.

\item[Proxy variable $Z$] a set of covariates taking values $z\in\mathcal{Z}$ used to predict ${A}$ in a probabilistic proxy model.
\end{description}

We present only the binary case $A\in\{a,b\}$ for simplicity. Unless otherwise stated, for multiclass $A$, our results generalize straightforwardly to the pairwise outcome disparity between any advantaged group and any disadvantaged group. 
(Additional details are given in Appendices B and C.)


\begin{definition}
\label{mean-group-outcome}
The \textit{mean group outcome} for the group $A=u$ is
\[
\mu(u) = \expect (Y \mid A = u),
\]
where $u\in\{a,b\}$ and $\expect$ is the usual expectation with respect to population distribution.
\end{definition}

\begin{definition}
\label{demographic-disparity}
The \textit{demographic disparity}, or Calders-Verwer gap \cite{Calders2010,Kamishima2012}, $\delta$, is the difference in mean group outcomes between the advantaged and disadvantaged groups:
\begin{equation}\label{def:demo-disparity}
\delta = \mu(a) - \mu(b)
\end{equation}
\end{definition}
A positive demographic disparity $\delta$ means that a higher proportion of the advantaged group $A=a$ receive a favorable outcome than the disadvantaged group $b$, i.e. the disadvantaged group experiences an outcome disparity.
Demographic disparity is simple to understand and is widely used \cite{lipton2018does, zliobaite2015relation, zafar2016learning}, despite its flaws \cite{dwork2012fairness, hardt2016equality}.

When the labels for the protected class $A$ are known, the demographic disparity can be simply and reliably estimated by the difference of within-group sample means.
Otherwise, a probabilistic proxy model may be used to estimate the probabilities $\pr (A = a \mid Z)$ and $\pr (A = b \mid Z)$ of membership to the different groups within $A$. BISG \cite{Elliott:2009aa} is an example of a probabilistic proxy model where $Z$ is taken to be surname and geolocation and conditional probabilities are estimated using the na\"ive Bayes methodology.
%
%
The goal of this paper is to quantify the bias in estimating demographic disparity using a proxy model.
We neglect the estimation bias intrinsic to estimating the proxy model itself, so that $\{\pr (A = u \mid Z) : u\in\{a,b\} \}$ describes the true population distribution of $A$ conditioned on $Z$ and hence the intrinsic uncertainty of predicting $A$ from $Z$.

\subsection{Thresholded estimator}

In order to introduce the estimators for demographic disparity, assume that we have $N$ independent and identically distributed (iid) samples $(Y_i, Z_i)_{i=1}^N$. The true membership $A_i$ of the $i$th sample is unknown, but we have access to the probabilistic proxy estimates $\{\pr (A_i = u \mid Z_i) : u\in\{a,b\}, i\in\{1,\dots,N\}\}$ by simply applying our proxy model to the observed proxy variable $Z_i$.

The ordinary approach is to predict a single value for class membership, $\hat{A}_i$ \cite{Baines:2014aa, zhang2016assessing}:

\begin{definition}
\label{def:thresholding_rule}
Let $q \in [\frac{1}{2}, 1)$.
Then the \textit{thresholded estimated membership} $\hat{A}_i$ for the $i$th unit is
\begin{equation*}
\hat{A}_i =\left\{
\begin{aligned}
&a, \ \pr (A_i = a \mid Z_i) > q, \\
&b, \ \pr (A_i = b \mid Z_i) > q, \\
&\text{NA}, \ \text{otherwise},
\end{aligned}
\right.
\end{equation*}
where NA stand for an unclassified observation that is excluded from the subsequent outcome disparity evaluation.
\end{definition}


Considering a unit with $A=a$, this estimation rule can be summarized pictorially as follows
\begin{center}
\includegraphics[width=0.6\linewidth]{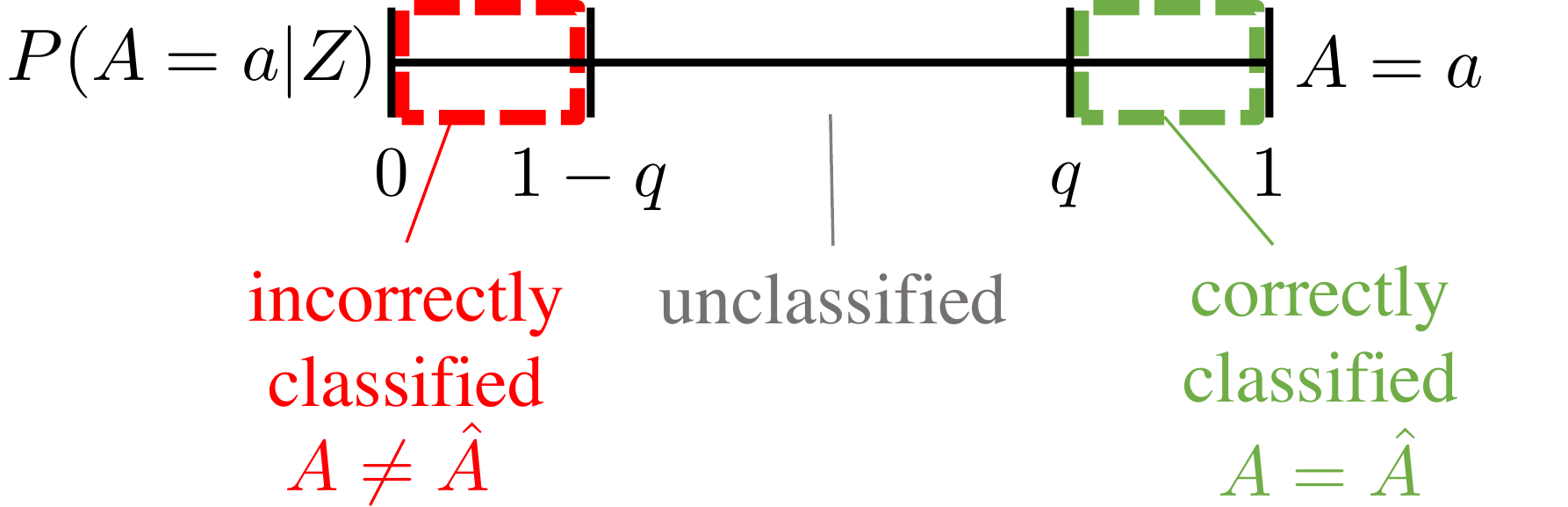}
\end{center}
where dashed boxes represent correct ($\hat{A}=A$, green) and incorrect classifications ($\hat{A}\ne A$, red), and the middle is unclassified.

We can use these predicted labels to estimate the mean group outcomes 
and demographic disparity by mimicking the simple estimator for 
group sample means difference, but imputing $\hat A$ for the
unknown $A$. This leads to the following estimator:

\begin{definition}
\label{def:thresholded_estimator}
Let $(Y_i, Z_i)_{i=1}^N$ be $N$ iid samples, $\{\hat{A}_i\}_{i=1}^N$ be the estimated labels according to the thresholding rule in Definition \ref{def:thresholding_rule}, and $\ind(S)$ be the indicator function for some set $S$. Then, the \textit{thresholded estimators} for mean group outcomes and demographic disparity are
\begin{align}
\hat{\mu}_{q}(u) & = \frac{\sum_{i = 1}^N \mathbb{I}(\hat{A}_i  = u)Y_i}{\sum_{i = 1}^N \mathbb{I}(\hat{A}_i  = u)}, u\in\{a,b\},  \nonumber \\
\hat{\delta}_{q} & = \hat{\mu}_{q}(a) - \hat{\mu}_{q}(b). \label{demo: est_race}
\end{align}
\end{definition}




\subsection{Weighted estimator}

The form of the thresholded estimators above shows clearly the use of a hard classification rule.
Under a probabilistic proxy model, this hard classification rule inevitably misclassifies some individuals,  given the intrinsic uncertainty in classifying the protected class.
Moreover, the threshold rule results in a group of unclassified individuals who are removed from the outcome disparity evaluation. 
To avoid these problems, we propose a new estimator that accounts for the soft classification generated by the proxy.

\begin{definition}
\label{def:weighted_estimator}
Let $(Y_i, Z_i)_{i=1}^N$ be defined as above.
Then, the \emph{weighted estimators} for mean group outcomes and demographic disparity are

\begin{align}
\hat{\mu}_W(u) & = \frac{\sum_{i = 1}^N \pr(A_i = a \mid Z_i)Y_i}{\sum_{i = 1}^N \pr(A_i = a \mid Z_i)}, u\in\{a,b\}  \nonumber \\
\hat{\delta}_W & = \hat{\mu}_W(a) - \hat{\mu}_W(b) \label{demo: weight}
\end{align}
\end{definition}



\subsection{Toy examples}
\label{sec:toy_examples}

In this part, we use two hypothetical toy examples to demonstrate the intuition regarding when the thresholded estimator \eqref{demo: est_race} can overestimate the demographic disparity. In both examples, we consider evaluating the disparity of loan approval with respect to two races $A = a$ and $A = b$. However, the true race $A$ is unknown, so the geolocation $Z$ is used to estimate race. Suppose that there are only two neighborhoods where all people live in one neighborhood have high income and all people live in the other neighborhood have low income. We further assume that the high-income neighborhood is primarily occupied by the advantaged group $a$ and the low-income neighborhood is primarily occupied by the disadvantaged group $b$. Therefore, the thresholding rule with $q = 0.5$ classifies all people living in the high-income neighborhood as the advantaged group, i.e., $\hat{A} = a$,  and all people living in the low-income neighborhood as the disadvantaged group, i.e., $\hat{A} = b$.

\begin{table}
\centering
\caption{Lending policy of Example \ref{ex:income}, which always accepts high
income applicants and never low income applicants.
Each number in the table cells
represents the number of applicants who truly belong to race $a$ or $b$.
The percentages in parentheses represent the average loan approval
rate in that quadrant.} \label{table:toy1}
\begin{tabular}{c c c}\toprule
& \multicolumn{2}{c}{True race} \\
Neighborhood & $a$ & $b$ \\
\midrule
 high income & 70 (100\%)          & 30 (100\%)\textsuperscript{i} \\
 low income  & 30 (0\%)\textsuperscript{ii} & 70 (0\%) \\
\bottomrule
\end{tabular}\\
\textsuperscript{i} Misclassified as race $a$ in high-income neighborhood.\\
\textsuperscript{ii} Misclassified as race $b$ in low-income neighborhood.
\end{table}

\begin{example}
\label{ex:income}
Consider an extreme lending policy: all people with high income get their loans approved, while all people with low income are rejected, no matter what their races are.
See Table \ref{table:toy1} for the illustration.
  Simple calculations show that the true demographic disparity $\delta = 40\%$ \footnote{$\delta = \frac{1}{100}(70\times 1 + 30\times 0) - \frac{1}{100}(30\times 1 + 70\times 0)  = 0.4$.} while the thresholded estimator $\hat{\delta}_{0.5} = 100\%$ \footnote{$\hat{\delta}_{0.5} = \frac{1}{100}(70\times 1 + 30\times 1) - \frac{1}{100}(30\times 0 + 70\times 0)  = 1$.}. Thus the thresholded estimator overestimates the demographic disparity. The reason for the overestimation is that the race proxy is correlated with the loan approval outcome because of the dependence between geolocation and socioeconomic status: people who live in the neighborhood primarily occupied by the advantaged group are also more likely to get loan approval because they have relatively high socioeconomic status, e.g.\ high income in this example. These people are classified as advantaged group because their neighborhoods are associated with high probability of belonging to the advantaged group. As a result, the thresholding rule misclassifies people who are from the disadvantaged group but get loan approval as the advantaged group. In this way, the thresholding rule leads to underestimates of the loan acceptance rate of the disadvantaged group. Analogously, the thresholding rule leads to overestimates of the loan acceptance rate of the advantaged group because it misclassifies people from the advantaged group but not likely to get loan approval as the disadvantaged group. Consequently, the misclassification of the protected attribute is dependent with the outcome, which ultimately leads to the overestimation of demographic disparity. The dependence between the misclassification and the outcome results from the \textit{inter-geolocation} outcome variation: the socioeconomic status, race proxy probabilities (i.e., race ratios), and loan acceptance rates vary across different neighborhoods such that the favorable outcome is positively correlated with the probability of belonging to the advantaged group.
\end{example}

\begin{table}
\centering
\caption{Lending policy of Example \ref{ex:affirmative_action}, with
affirmative action that favors the disadvantaged group over the advantaged
group at any given income level.\label{table:toy2}}
\begin{tabular}{c c c}\toprule
& \multicolumn{2}{c}{True race} \\
Neighborhood & $a$ & $b$ \\
\midrule
high income & 70 (70\%) & 30 (80\%)\textsuperscript{i}\\ 
low income & 30 (20\%)\textsuperscript{ii} & 70 (30\%)\\ 
\bottomrule
\end{tabular}\\
\textsuperscript{i} Misclassified as race $a$ in high-income neighborhood.\\
\textsuperscript{ii} Misclassified as race $b$ in low-income neighborhood.
\end{table}

\begin{example}
\label{ex:affirmative_action}
  Now consider a hypothetical lending policy with affirmative action that approves the disadvantaged group with higher rate than the advantaged group with the same income level. But overall people with high income are still more likely to be accepted. See Table \ref{table:toy2} for a concrete example. Simple calculations show that the true demographic disparity $\delta = 10\%$ \footnote{$\delta = \frac{1}{100}(70\times 0.7 + 30 \times 0.2) - \frac{1}{100}(30\times 0.8 + 70 \times 0.3) = 0.1$.}, which means that the disadvantaged group overall has lower chance to get loan approval due to their population concentration in the low-income neighborhood. However, the thresholded estimator gives  $\hat{\delta}_{0.5} = 46\%$ \footnote{$\hat{\delta}_{0.5} = \frac{1}{100}(70\times 0.7 + 30 \times 0.8) - \frac{1}{100}(30\times 0.2 + 70 \times 0.3) = 0.46$.} that overestimates the demographic disparity. The reason for the overestimation is that the the disadvantaged group have higher approval rate than their neighbors from the advantaged group. As a result, misclassifying part of the disadvantaged group as the advantaged group raises the average approval rate for the advantaged group. Similarly, misclassifying part of the advantaged group as the disadvantaged group lowers down the average approval rate for the disadvantaged group. Consequently, the misclassification of the protected attribute is also dependent with the outcome and ultimately leads to overestimation of demographic disparity. In contrast to Example \ref{ex:income}, the dependence between the misclassification and the outcome results from the intra-geolocation outcome variation: people from different protected groups living in the same locations have different chance of getting the favorable outcome.
\end{example}

In both examples, the protected attribute misclassification made by the thresholding rule is not uniformly at random. Instead, the misclassification shows systematic pattern with respect to the outcome either due to the inter-geolocation outcome variation (Example \ref{ex:income}) or intra-geolocation outcome variation (Example \ref{ex:affirmative_action}). In Section \ref{sec:theorems}, we will formalize the main intuition in these two examples and show that the interplay of the two bias sources captured by these two examples determines the overestimation or underestimation of the thresholded estimator. In Section \ref{sec:numerics}, we will further identify the specific bias source that accounts for the estimation bias of the thresholded estimator in a mortgage data set.

\section{Bias in thresholded and weighted estimators}
\label{sec:theorems}

In this section, we derive the asymptotic biases for the thresholded estimator \eqref{demo: est_race} and weighted estimator \eqref{demo: weight} for demographic disparity. We also provide some interpretable sufficient conditions under which these two estimators overestimate or underestimate the demographic disparity.


\subsection{Weighted estimator}

\begin{theorem}\label{thm: weighted}
Let $A$ be a binary protected class with values $a$ and $b$. The bias of the weighted estimator $\hat{\delta}_{W}$ in Definition \ref{def:weighted_estimator} for demographic disparity $\delta$ in \eqref{def:demo-disparity} is
\[
\hat{\delta}_{W} - \delta = [\hat{\mu}_{W}(a) - \mu(a)] -  [\hat{\mu}_{W}(b) - \mu(b)],
\]
where as $N \to \infty$, the biases in the weighted estimators for the mean group outcomes $\hat{\mu}_{W}(u)$, for $u\in\{a,b\}$, converge almost surely to
\begin{equation}
\hat{\mu}_{W}(u) - \mu(u) \xrightarrow{\textrm{a.s.}} -\frac{\expect [\cov (\ind(A = u), Y \mid Z)]}{\pr (A = u)}.\label{formula: weight-bias1}
\end{equation}

\end{theorem}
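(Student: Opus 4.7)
The plan is to combine the strong law of large numbers with a conditional‐expectation identity. First I would observe that, since $(Y_i, Z_i)$ are i.i.d.\ and $\pr(A_i = u \mid Z_i)$ is a (bounded) function of $Z_i$, the numerator and denominator of $\hat\mu_W(u)$ are sample averages of i.i.d.\ bounded random variables. Dividing each by $N$ and applying the strong law gives
\[
\frac{1}{N}\sum_{i=1}^N \pr(A_i = u \mid Z_i)\,Y_i \xrightarrow{\text{a.s.}} \expect[\pr(A=u\mid Z)\,Y], \qquad \frac{1}{N}\sum_{i=1}^N \pr(A_i = u \mid Z_i) \xrightarrow{\text{a.s.}} \expect[\pr(A=u\mid Z)] = \pr(A=u).
\]
Provided $\pr(A=u) > 0$, the continuous mapping theorem then yields $\hat\mu_W(u)\xrightarrow{\text{a.s.}} \expect[\pr(A=u\mid Z)\,Y] / \pr(A=u)$.

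Next I would rewrite both terms of the bias using iterated expectations, exploiting the key identity $\pr(A=u\mid Z) = \expect[\ind(A=u)\mid Z]$. On one hand,
\[
\expect[\pr(A=u\mid Z)\,Y] = \expect\bigl[\pr(A=u\mid Z)\,\expect[Y\mid Z]\bigr].
\]
On the other hand, $\pr(A=u)\,\mu(u) = \expect[\ind(A=u)\,Y] = \expect\bigl[\expect[\ind(A=u)\,Y\mid Z]\bigr]$. Subtracting and pulling the expectations together gives
\[
\hat\mu_W(u) - \mu(u) \xrightarrow{\text{a.s.}} \frac{\expect\bigl[\pr(A=u\mid Z)\,\expect[Y\mid Z] - \expect[\ind(A=u)\,Y\mid Z]\bigr]}{\pr(A=u)}.
\]

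Finally I would recognize the integrand as the negative of the conditional covariance: by definition,
\[
\cov(\ind(A=u),Y\mid Z) = \expect[\ind(A=u)\,Y\mid Z] - \expect[\ind(A=u)\mid Z]\,\expect[Y\mid Z],
\]
so the bracket above equals $-\cov(\ind(A=u),Y\mid Z)$, giving the claimed formula. The bias decomposition for $\hat\delta_W$ then follows from the trivial identity $\hat\delta_W - \delta = [\hat\mu_W(a) - \mu(a)] - [\hat\mu_W(b) - \mu(b)]$.

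There is not really a hard step here; the only subtlety is handling the ratio limit, which needs $\pr(A=u)>0$ to invoke continuous mapping, and keeping the conditioning on $Z$ straight so that $\pr(A=u\mid Z)$ can be read both as a regression function and as a conditional probability. Once that double interpretation is exploited, the bias reduces to a single conditional covariance by pure algebra, which is precisely why the weighted estimator has only one bias source as advertised in the introduction.
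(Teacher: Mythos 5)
Your proposal is correct and follows essentially the same route as the paper's proof: apply the strong law of large numbers to the numerator and denominator separately, use the tower property with $\pr(A=u\mid Z)=\expect[\ind(A=u)\mid Z]$, and recognize the resulting difference as $-\expect[\cov(\ind(A=u),Y\mid Z)]/\pr(A=u)$. Your explicit note that $\pr(A=u)>0$ is needed for the ratio limit is a small point of care the paper leaves implicit, but the argument is otherwise identical.
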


We omit the $a.s.$ (almost sure) notation later for brevity.

\begin{corollary}
\label{corr:weighted-unbiased}
If $Y$ is independent of $A$ conditionally on $Z$, then the weighted estimator for demographic disparity, $\hat{\delta}_{W}$, is asymptotically unbiased.
\end{corollary}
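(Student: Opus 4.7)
The plan is to derive the corollary as an immediate consequence of Theorem~\ref{thm: weighted}, since the theorem already expresses the asymptotic bias of each mean-group-outcome estimator in terms of the conditional covariance $\cov(\ind(A = u), Y \mid Z)$. Under the hypothesis $Y \perp A \mid Z$, this conditional covariance should vanish identically, which will kill both bias terms in the decomposition $\hat{\delta}_W - \delta = [\hat{\mu}_W(a) - \mu(a)] - [\hat{\mu}_W(b) - \mu(b)]$.

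The first step is to observe that conditional independence $Y \perp A \mid Z$ implies $Y \perp \ind(A = u) \mid Z$ for each $u \in \{a, b\}$, since $\ind(A = u)$ is a measurable function of $A$. Standard properties of conditional independence then yield
\[
\cov(\ind(A = u), Y \mid Z) = 0 \quad \text{almost surely}.
\]
The second step is to take expectations over $Z$ and invoke the bias formula \eqref{formula: weight-bias1} from Theorem~\ref{thm: weighted}, giving
\[
\hat{\mu}_W(u) - \mu(u) \xrightarrow{} -\frac{\expect[\cov(\ind(A = u), Y \mid Z)]}{\pr(A = u)} = 0
\]
for each $u \in \{a, b\}$, provided $\pr(A = u) > 0$ so the denominator is well-defined. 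The third step is simply to subtract the two unbiased limits to conclude that $\hat{\delta}_W - \delta \to 0$ almost surely.

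There is no real obstacle here beyond carefully citing the theorem and the elementary fact that conditional independence makes the conditional covariance vanish; the only subtlety worth flagging is the implicit nondegeneracy assumption $\pr(A = u) > 0$, which is already built into the statement of Theorem~\ref{thm: weighted}. The proof in the paper should therefore be a one- or two-line argument invoking \eqref{formula: weight-bias1}.
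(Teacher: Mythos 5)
Your proposal is correct and follows exactly the same route as the paper's own proof: note that conditional independence forces $\cov(\ind(A=u),Y\mid Z)=0$ for $u\in\{a,b\}$, plug into the bias formula \eqref{formula: weight-bias1} from Theorem~\ref{thm: weighted}, and subtract. The extra remarks about measurability of $\ind(A=u)$ and the nondegeneracy $\pr(A=u)>0$ are fine but not needed beyond what the paper already states.
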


If $Y$ is independent of $A$ conditionally on $Z$, then the advantaged group and disadvantaged group with the same $Z$ values are equally treated in terms of the average outcome.
Nevertheless, this does not contradict the existence of overall disparity against the disadvantaged group in terms of the unconditioned average outcome, as an example of Simpson's paradox \cite{Simpson1951}.
The conditional independence assumption required by Corollary \ref{corr:weighted-unbiased}
is trivially satisfied if $Y$ is the output of some function $f(X)$, and $Z$ includes the input features $X$, since $Y$ is now determined entirely by $Z$.
Such a situation arises naturally when $Y$ is the output of machine learning algorithms. In the Appendix \ref{section: unbiased_weighted}, we show a semi-synthetic example based on the mortgage dataset where the weighted estimator is asymptotically unbiased when the proxy model is well constructed so that the conditional independence assumption is satisfied.

However, this conditional independence assumption may not hold in practice and the weighted estimator may be biased. Consider the example of loan application with race proxy based on geolocation. If within most locations, the affirmative action described in Example \ref{ex:affirmative_action} is present, i.e., the disadvantaged group is more likely to get a loan than their neighbors from the advantaged group, then $Y$ covaries negatively with $\ind(A = a)$ and positively with $\ind(A = b)$ conditionally on $Z = z$, for most values of $z \in \mathcal{Z}$, which implies that the weighted estimator overestimates the demographic disparity. On the contrary, if within most locations, the advantaged group is more likely to get loan than their neighbors belonging to the disadvantaged group, then $Y$ covaries in the exact opposite way, implying that the weighted estimator underestimates the demographic disparity. Overall, the estimation bias of the weighted estimator depends on the intra-geolocation variation of the loan application outcome.

\subsection{Thresholded estimator}

Before stating the asymptotic bias for the estimator \eqref{demo: est_race}, we define the following terms for $u \in \{a, b\}$:
\begin{align*}
    \Delta_1(u) = &\; \expect [Y \mid \pr(A = u \mid Z) > q, A = u^c]\\
    & - \expect [Y \mid \pr(A = u \mid Z) > q, A = u], \\
    \Delta_2(u) = &\; \expect [Y \mid \pr(A = u\mid Z) \le q, A = u] \\
    & - \expect [Y \mid \pr(A = u\mid Z) > q, A = u],
\end{align*}
where $q$ is the threshold used for estimating race, and $u^c$ is the class opposite to $u$, i.e., $u^c = b$ if $u = a$ and $u^c = a$ if $u = b$. $\Delta_1(u)$ measures the outcome mean discrepancy for two different protected groups \textit{within} the same proxy probability range, and $\Delta_2(u)$ measures the outcome mean discrepancy \textit{across} different proxy probability ranges for the same protected group. Consider the example of loan application with race proxy based on geolocation. In this example, $\Delta_1(u)$ measures the loan approval disparity between two race groups who live in locations that are primarily occupied by one of these two races (in terms of the threshold $q$). In contrast $\Delta_2(u)$ measures the loan approval rate disparity between people belonging to a race who live in locations that are primarily occupied by this race group and people belonging to this race who live in locations that are less occupied by this race. Therefore, $\Delta_1(u)$ roughly characterizes the \textit{intra-geolocation} variation of loan outcome and $\Delta_2(u)$ roughly characterizes the \textit{inter-geolocation} variation of loan outcome.

\begin{theorem} \label{thm: demographic-est-race-bias}
Let $A$ be a binary protected class with values $a$ and $b$. 
The bias for the thresholded estimator $\hat{\delta}_q$ in Definition \ref{def:thresholded_estimator} is:
\[
\hat{\delta}_{q} - \delta = [\hat{\mu}_{q}(a) - \mu(a)] -  [\hat{\mu}_{q}(b) - \mu(b)],
\]
and as ${N \to \infty}$, for $u \in \{a, b\}$ and $u^c \in \{a, b\}$ as the class opposite to $u$, 
\begin{align*}
    \hat{\mu}_{q}(u) - \mu(u) \to &\;\Delta_1(u)C_1(u) - \Delta_2(u)C_2(u) \\
    &+ (\Delta_1(u) - \Delta_2(u))C_3(u),
\end{align*}
where
\begin{align*}
    C_1(u) & = \pr(\hat{A} = u \mid A = u)\pr(A = u^c \mid \hat{A} = u), \\
    C_2(u) & = \pr(A = u \mid \hat{A} = u)\pr(\hat{A} \ne u \mid A = u)\textrm{, and}\\
    C_3(u) & = \pr(\hat{A} \ne u \mid A = u)\pr(A = u^c \mid \hat{A} = u).
\end{align*}
Here $\hat{A} \ne u$ means that $\hat{A} = u^c$ or $\hat{A} = \text{NA}$.
\end{theorem}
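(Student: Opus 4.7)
The splitting $\hat{\delta}_q - \delta = (\hat{\mu}_q(a) - \mu(a)) - (\hat{\mu}_q(b) - \mu(b))$ is immediate, so the content of the theorem is the almost sure limit of $\hat{\mu}_q(u) - \mu(u)$ for an arbitrary $u \in \{a,b\}$. First, since $\hat{A}_i$ is a deterministic function of $Z_i$ alone, the pairs $(\ind(\hat{A}_i = u),\, \ind(\hat{A}_i = u) Y_i)$ are iid, so the strong law of large numbers applied to the numerator and denominator of $\hat{\mu}_q(u)$, combined with the continuous mapping theorem (under the tacit assumption $\pr(\hat{A} = u) > 0$), gives $\hat{\mu}_q(u) \to \expect[Y \mid \hat{A} = u]$ almost surely. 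A key bookkeeping observation that will be used throughout is that for $q \ge 1/2$ the event $\{\hat{A} = u\}$ equals $\{\pr(A = u \mid Z) > q\}$ and its complement $\{\hat{A} \ne u\}$ equals $\{\pr(A = u \mid Z) \le q\}$, so the $\na$ class simply folds into the latter and the conditional expectations appearing in the definitions of $\Delta_1(u)$ and $\Delta_2(u)$ are exactly the ones that will arise from the decomposition below.

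Next I would apply the tower property twice: expand $\expect[Y \mid \hat{A} = u]$ by conditioning on $A$, and expand $\mu(u) = \expect[Y \mid A = u]$ by conditioning on the event $\{\hat{A} = u\}$. Writing $\mu_0 := \expect[Y \mid \pr(A = u \mid Z) > q,\, A = u]$ as a common reference, the definitions of $\Delta_1, \Delta_2$ let me write $\expect[Y \mid \pr(A = u \mid Z) > q,\, A = u^c] = \mu_0 + \Delta_1(u)$ and $\expect[Y \mid \pr(A = u \mid Z) \le q,\, A = u] = \mu_0 + \Delta_2(u)$. Because $\pr(A = u \mid \hat{A} = u) + \pr(A = u^c \mid \hat{A} = u) = 1$ and $\pr(\hat{A} = u \mid A = u) + \pr(\hat{A} \ne u \mid A = u) = 1$, the $\mu_0$ contributions cancel on subtraction, leaving the compact two-term expression
\[
\expect[Y \mid \hat{A} = u] - \mu(u) \;=\; \pr(A = u^c \mid \hat{A} = u)\,\Delta_1(u) \;-\; \pr(\hat{A} \ne u \mid A = u)\,\Delta_2(u).
\]

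The final step is to regroup this into the three-term form stated in the theorem. Directly from the definitions of $C_1, C_2, C_3$, one verifies the identities $C_1(u) + C_3(u) = \pr(A = u^c \mid \hat{A} = u)$ and $C_2(u) + C_3(u) = \pr(\hat{A} \ne u \mid A = u)$, each of which reduces to factoring out a common conditional probability and invoking $\pr(\hat{A} = u \mid A = u) + \pr(\hat{A} \ne u \mid A = u) = 1$ or $\pr(A = u \mid \hat{A} = u) + \pr(A = u^c \mid \hat{A} = u) = 1$. Substituting converts the two-term expression into $\Delta_1(u) C_1(u) - \Delta_2(u) C_2(u) + (\Delta_1(u) - \Delta_2(u)) C_3(u)$. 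The analysis itself is routine -- essentially a double tower property with bookkeeping -- so the main obstacle is really the choice of regrouping in this final step. The two-term form is already a valid bias expression, but it mixes the intra- and inter-geolocation effects identified in Examples \ref{ex:income} and \ref{ex:affirmative_action}; the point of rewriting via $C_1, C_2, C_3$ is to isolate a pure intra-geolocation contribution ($\Delta_1 C_1$), a pure inter-geolocation contribution ($-\Delta_2 C_2$), and an interaction term, so that the qualitative story of the toy examples becomes visible in the algebra.
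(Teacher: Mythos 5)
Your proposal is correct and follows essentially the same route as the paper: the paper partitions $\{A=u\}$ and $\{\hat A=u\}$ into the events $\{\pr(A=u\mid Z)>q \text{ or } \le q\} \cap \{A = u \text{ or } u^c\}$ and writes both $\mu(u)$ and the almost-sure limit of $\hat\mu_q(u)$ as ratios of expectations over these events, which is exactly your tower-property expansion. Your write-up actually makes explicit what the paper dismisses as ``simple algebra'' --- the intermediate two-term form $\pr(A=u^c\mid\hat A=u)\Delta_1(u)-\pr(\hat A\ne u\mid A=u)\Delta_2(u)$ and the identities $C_1+C_3=\pr(A=u^c\mid\hat A=u)$, $C_2+C_3=\pr(\hat A\ne u\mid A=u)$ --- and all of these check out.
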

The generalization to multiclass $A$ is straightforward and in Appendix \ref{section:theory_bias} we show that the bias formula in Theorem \ref{thm: demographic-est-race-bias} for binary protected class capture the main effects of the bias for multiclass $A$.

Theorem \ref{thm: demographic-est-race-bias} shows that the occurrence of overestimation or underestimation depends on complex interplay of the $\Delta$ terms. In the following corollary, we provide the simplest set of sufficient conditions for the overestimation and underestimation to demonstrate the main intuition.

\begin{corollary} \label{corollary: demographic-est-race}
Suppose
\begin{subequations}
\begin{align}
  \Delta_1(a) &\ge 0,\tag{5i}\label{eq:condition1} \\
  -\Delta_1(b) &\ge 0,\tag{5ii}\label{eq:condition2} \\
 -\Delta_2(a) &\ge 0,\tag{5iii}\textrm{ and} \label{eq:condition3}\\
  \Delta_2(b) &\ge 0,\tag{5iv}\label{eq:condition4}
\end{align}
\end{subequations}
and at least one inequality holds strictly.
Then in the limit $N\!\rightarrow\!\infty$,
$\hat{\mu}_q(a) > \mu(a)$, $\hat{\mu}_q(b) < \mu(b)$, and $\hat{\delta}_q > \delta$, i.e. the thresholded estimator overestimates the demographic disparity.

Conversely, let \begin{enumerate*}[label=(\roman*\ensuremath{^\prime})]
\item $\Delta_1(a)\!\le\!0$,
\item $\,-\Delta_1(b)\!\le\!0$,
\item $\,-\Delta_2(a)\!\le\!0$, and
\item $\Delta_2(b)\!\le\!0$,
\end{enumerate*}
and at least one inequality holds strictly.
Then in the limit $N\rightarrow\infty$,
$\hat{\mu}_q(a) < \mu(a)$, $\hat{\mu}_q(b) > \mu(b)$, and $\hat{\delta}_q < \delta$, i.e. the thresholded estimator underestimates the demographic disparity.
\end{corollary}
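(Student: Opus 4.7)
The plan is to derive Corollary~\ref{corollary: demographic-est-race} directly from the asymptotic bias expression in Theorem~\ref{thm: demographic-est-race-bias}. The key observation is that the three weights $C_1(u)$, $C_2(u)$, $C_3(u)$ are products of conditional probabilities and hence always nonnegative, so the sign of the limiting bias $\hat{\mu}_q(u) - \mu(u)$ is controlled entirely by the signs of $\Delta_1(u)$, $\Delta_2(u)$, and $\Delta_1(u) - \Delta_2(u)$. The proof then reduces to sign checks applied term by term and combined through the identity $\hat{\delta}_q - \delta = (\hat{\mu}_q(a) - \mu(a)) - (\hat{\mu}_q(b) - \mu(b))$ already recorded in the theorem.

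For the overestimation direction, I would substitute $u = a$ into
\[
\hat{\mu}_q(u) - \mu(u) \to \Delta_1(u) C_1(u) - \Delta_2(u) C_2(u) + (\Delta_1(u) - \Delta_2(u)) C_3(u).
\]
Hypothesis~(\ref{eq:condition1}) makes $\Delta_1(a) C_1(a) \ge 0$; hypothesis~(\ref{eq:condition3}) makes $-\Delta_2(a) C_2(a) \ge 0$; together they imply $\Delta_1(a) \ge 0 \ge \Delta_2(a)$, so $(\Delta_1(a) - \Delta_2(a)) C_3(a) \ge 0$ as well, and summing the three pieces gives $\hat{\mu}_q(a) \ge \mu(a)$. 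Setting $u = b$ and invoking hypotheses~(\ref{eq:condition2}) and~(\ref{eq:condition4}) makes every summand nonpositive by the mirror-image argument, so $\hat{\mu}_q(b) \le \mu(b)$. Subtracting one from the other then yields $\hat{\delta}_q \ge \delta$. The underestimation direction is strictly symmetric, with every $\Delta$-sign reversed so that each summand flips and the net inequality flips with it.

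The main obstacle is upgrading the weak inequalities to the strict inequality claimed by the corollary whenever at least one of (\ref{eq:condition1})--(\ref{eq:condition4}) is strict. I would handle this by restricting attention to the non-degenerate regime in which the proxy at threshold $q$ both misclassifies some members of each class and correctly classifies some members of each class; in that regime every factor appearing in $C_1(u)$, $C_2(u)$, and $C_3(u)$ is strictly positive. The summand whose $\Delta$-factor is strictly signed then contributes a strictly signed quantity that cannot be cancelled by the other two summands, since under the corollary's hypotheses all summands within a given expression share the same sign. Degenerate cases, such as a proxy that never assigns a particular label at threshold $q$, make the corresponding $\hat{\mu}_q(u)$ either undefined or trivially equal to $\mu(u)$ and can be excluded without loss of substance.
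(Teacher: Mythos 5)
Your proposal is correct and follows the same route as the paper, whose entire proof of this corollary is the single line ``The conclusions follow immediately from Theorem~\ref{thm: demographic-est-race-bias}''; you simply supply the term-by-term sign checks (nonnegativity of $C_1,C_2,C_3$, plus $\Delta_1(u)-\Delta_2(u)$ inheriting the common sign) that the paper leaves implicit. Your extra care about strictness---requiring the $C$ factors to be positive in the non-degenerate regime---is a reasonable reading of a point the paper glosses over entirely.
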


The conditions \eqref{eq:condition1}---\eqref{eq:condition4} are demonstrated in Figure \ref{figure: theorem}.

Conditions \eqref{eq:condition1}---\eqref{eq:condition2} exactly formalize the intuition captured by Example \ref{ex:affirmative_action}.
In our loan application example, using a geolocation-based race proxy, these two conditions capture the intra-geolocati\-on outcome variation:
on average higher proportion of disadvantaged group $b$ receive the favorable outcome than the advantaged group $a$ among all the geolocations that are primarily occupied by one race (in terms of the threshold $q$).

The conditions \eqref{eq:condition3}---\eqref{eq:condition4} exactly formalize the intuition captured by Example \ref{ex:income}.
They characterize the dependence between the decision outcome and the proxy probability conditionally on the protected attribute.
Specifically, condition \eqref{eq:condition3} holds when the decision outcome is positively correlated with the probability of belonging to the advantaged group, while condition \eqref{eq:condition4} holds when the decision outcome is negatively correlated with the probability of belonging to the disadvantaged group conditionally on the true protected attribute.
In the example of loan application with race proxy based on geolocation, geolocation is correlated with both race ratio and socioeconomic status (e.g., income, FICO score, etc.): locations that are primarily occupied by the advantaged racial groups tend to be associated with relatively higher socioeconomic status while locations that are primarily occupied by the disadvantaged racial groups tend to be associated with relatively lower socioeconomic status.
As a result, people who live in locations dominant by the advantaged racial group are assigned with high probability of belonging to the advantaged group, and they are more likely to get approved in loan application because of relatively higher socioeconomic status.
Conversely, the loan approval is negatively correlated with the probability of belonging to disadvantaged groups.
This example demonstrates that \eqref{eq:condition3}---\eqref{eq:condition4} are likely to hold when the predictors $Z$ are also strongly correlated with the decision outcome, e.g., the geolocation is correlated with the loan approval because of the socioeconomic status disparities across different geolocations.

\begin{figure}
\centering
\includegraphics[width=\linewidth]{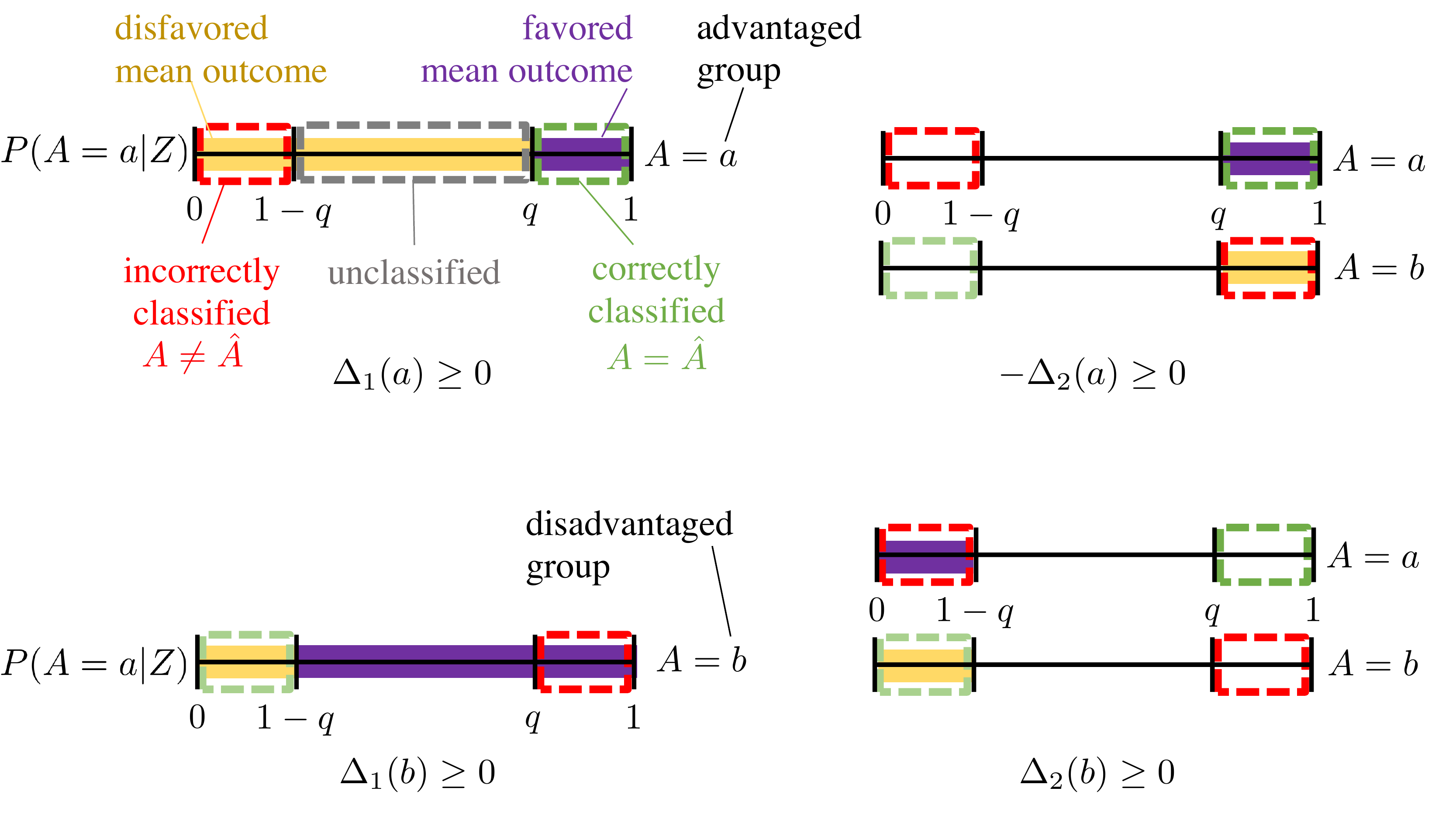}
  \caption{Illustration of conditions \eqref{eq:condition1}---\eqref{eq:condition4} in Corollary \ref{corollary: demographic-est-race}, illustrating the complex interplay between membership in a protected class $A$ and outcome $Y$ that give rise to overestimated demographic disparity using the thresholded estimator of Definition \ref{def:thresholded_estimator}.
  The horizontal axes show $\pr({A}=a|Z)$, the probability that a probabilistic proxy model predicts membership in the advantaged group $a$.
  Dashed boxes represent correct ($\hat{A}=A$, green) and incorrect classifications ($\hat{A}\ne A$, red), from the thresholded estimated membership rule of Definition \ref{def:thresholding_rule}, with the interval $(1-q,q)$ always unclassified.
  Solid bars represent favored (purple) and disfavored mean outcomes (yellow).
} \label{figure: theorem}
\end{figure}

The following corollary shows that \eqref{eq:condition1}---\eqref{eq:condition4} have effects with different magnitudes on the overestimation bias:
\begin{corollary} \label{corollary: C-functions}
Let $A$ be a binary protected class with values $a$ and $b$. For $u \in \{a, b\}$ and $u^c$ as the class opposite to $u$, the quantities $C_1(u), C_2(u), C_3(u)$ in Theorem \ref{thm: demographic-est-race-bias} are related in the following way:
\begin{enumerate}[label=(\roman*)]
    \item $C_2(u) > C_3(u)$ if and only if
        \begin{equation}\label{c_value_23}
        \pr (A = u \mid \hat{A} = u) > \pr (A = u^c \mid \hat{A} = u).
        \end{equation}
    \item $C_2(u) > C_1(u)$ if and only if
        \begin{equation}\label{c_value_21}
        \pr (A = u) > \pr (\hat{A} = u)
        \end{equation}
\end{enumerate}
Thus if the conditions \eqref{c_value_23} and \eqref{c_value_21} both hold, then $C_2(u) > C_1(u)$ and  $C_2(u) >C_3(u)$.
\end{corollary}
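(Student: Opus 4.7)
The plan is to manipulate the three quantities $C_1(u), C_2(u), C_3(u)$ using only elementary algebra together with two facts: (a) since $A$ is binary, conditional on any event with positive probability we have $\pr(A = u \mid \cdot) + \pr(A = u^c \mid \cdot) = 1$, and (b) $\pr(\hat{A} = u \mid A = u) + \pr(\hat{A} \ne u \mid A = u) = 1$. Note that $\hat{A} \ne u$ aggregates the two events $\hat{A} = u^c$ and $\hat{A} = \text{NA}$; but conditioning on $\hat{A} = u$ restricts $A$ to the binary alternatives $\{u, u^c\}$, so (a) applies cleanly there.

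For part (i), I would just form the difference
\[
C_2(u) - C_3(u) = \pr(\hat{A}\ne u \mid A=u)\bigl[\pr(A=u\mid\hat{A}=u) - \pr(A=u^c\mid\hat{A}=u)\bigr],
\]
by factoring out the common factor $\pr(\hat{A}\ne u\mid A=u)$. Assuming the nondegenerate case $\pr(\hat{A}\ne u\mid A=u) > 0$, the sign of $C_2(u) - C_3(u)$ coincides with the sign of the bracket, which gives the stated equivalence \eqref{c_value_23} directly.

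For part (ii), the key step is a cancellation. I would write
\[
C_2(u) - C_1(u) = \pr(A=u\mid\hat{A}=u)\pr(\hat{A}\ne u\mid A=u) - \pr(\hat{A}=u\mid A=u)\pr(A=u^c\mid\hat{A}=u),
\]
then substitute $\pr(A=u^c\mid\hat{A}=u) = 1 - \pr(A=u\mid\hat{A}=u)$ in the second term and collect the coefficient of $\pr(A=u\mid\hat{A}=u)$, which becomes $\pr(\hat{A}\ne u\mid A=u) + \pr(\hat{A}=u\mid A=u) = 1$ by fact (b). The entire expression collapses to $\pr(A=u\mid\hat{A}=u) - \pr(\hat{A}=u\mid A=u)$. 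Then Bayes' rule yields $\pr(A=u\mid\hat{A}=u) = \pr(\hat{A}=u\mid A=u)\pr(A=u)/\pr(\hat{A}=u)$, so $C_2(u) > C_1(u)$ is equivalent to $\pr(A=u)/\pr(\hat{A}=u) > 1$, i.e.\ \eqref{c_value_21}. The final statement of the corollary is then immediate by combining (i) and (ii).

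The main obstacle is not difficulty but rather bookkeeping: one must carefully distinguish $\pr(\hat{A}\ne u\mid A=u)$ (which is $1 - \pr(\hat{A}=u\mid A=u)$ and covers both the $u^c$ and NA cases) from $\pr(A=u^c\mid\hat{A}=u)$ (which is $1 - \pr(A=u\mid\hat{A}=u)$, legitimately binary because conditioning on $\hat{A}=u$ excludes NA). Once those two complementarity identities are correctly applied, the manipulations in part (ii) collapse cleanly to a difference of two quantities whose ratio is governed by Bayes' rule, and the equivalence falls out.
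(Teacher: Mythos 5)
Your proposal is correct and follows essentially the same route as the paper: part (i) by factoring out the common factor $\pr(\hat{A}\ne u\mid A=u)$, and part (ii) by using the two complementarity identities to collapse $C_2(u)-C_1(u)$ to $\pr(A=u\mid\hat{A}=u)-\pr(\hat{A}=u\mid A=u)$ and then comparing via Bayes' rule (the paper equivalently rewrites both terms as the joint probability over the respective marginals). Your explicit attention to the nondegeneracy conditions and to the distinction between the binary complement under $\hat{A}=u$ and the three-way complement $\hat{A}\ne u$ is a slight improvement in rigor over the paper's terser argument, but the substance is identical.
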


Condition \eqref{c_value_23} holds as long as the proxy model is reasonably predictive of the true protected class. Condition \eqref{c_value_21} usually holds for the thresholded estimated membership (Definition \ref{def:thresholding_rule}) when a high threshold $q$ is used: a fairly high fraction of observations are unclassified under the thresholded estimation rule with high threshold $q$, so the fraction of observations  classified into one protected class is usually lower than the fraction of observations actually belonging to that  protected class.
Therefore, when a high threshold $q$ is used, usually conditions \eqref{eq:condition3}---\eqref{eq:condition4} contribute more to the overestimation bias than \eqref{eq:condition1}---\eqref{eq:condition2}.
As a result, even if \eqref{eq:condition1}---\eqref{eq:condition2} are violated, as long as \eqref{eq:condition3}---\eqref{eq:condition4} hold, the thresholded estimator is still very likely to overestimate demographic disparity.
In contrast, when a low threshold is used, the difference between $\pr (A = u) > \pr (\hat{A} = u)$ is usually smaller, and thus $C_2(u) - C_1(u)$ is also smaller.
In this case, \eqref{eq:condition1}---\eqref{eq:condition2} and \eqref{eq:condition3}---\eqref{eq:condition4} may have comparable contribution to the estimation bias.
In Section 4.2, we will verify these facts by using different thresholds on the mortgage dataset.

\section{Numerical results}
\label{sec:numerics}

\subsection{Analysis of bias terms in synthetic data}

In this part, we simulate an example where geolocation is used to construct proxy for race. We use this example to demonstrate different sources of overestimation and underestimation bias of the weighted estimator and the thresholded estimator.

\paragraph{Experimental setup}
We consider race as the unknown protected attribute $A$ with $A = a$ being the advantaged racial group and $A = b$ being the disadvantaged racial group. Suppose there are three different neighborhoods ($Z = z_1, z_2, z_3$) where the proportions of people belonging to the advantaged group are $0.2, 0.5, 0.8$ respectively. These proportions are in turn the race proxy probabilities. For example, all people who live in the neighborhood $z_1$ are assigned with $(\pr (A = a \mid Z = z_1) = 0.2, \pr (A = b \mid Z = z_1)  = 0.8)$ as their race proxy. If we apply the thresholded estimation rule with $q = 0.75$, the estimated races for people living in neighborhoods tracts are:
\begin{center}
\begin{tabular}{|c||c|c|c|}
\hline
 & $Z=z_{1}$ & $Z=z_{2}$ & $Z=z_{3}$\tabularnewline
\hline
\hline
$\hat{A}$ & $b$ & NA & $a$\tabularnewline
\hline
\end{tabular}
\end{center}
We assume that geolocation is strongly correlated with people's income $X$: $\expect (X \mid Z = z_1) < \expect (X \mid Z = z_2) = 2 < \expect (X \mid Z = z_3)$. Therefore, $z_1$ can be considered as the low-income tract where  the proportion of people belonging to the disadvantaged group is high ($\pr (A = b \mid Z = z_3) = 0.8$), $z_2$ can be considered as middle-income tract where the proportions of people belonging to the advantaged group and the disadvantaged group are equal ($\pr (A = a \mid Z = z_2) = \pr (A = b \mid Z = z_2) = 0.5$), and $z_3$ can be considered as high-income tract where the proportion of people belonging to the advantaged group is also high ($\pr (A = a \mid Z = z_1) = 0.8$). The decision outcome $Y$ is loan approval, which we assume solely depends on the income $X$: we simulate $Y$ according to $\pr (Y = 1 \mid X) = 1/(1 + \exp(-\lambda(X - 2)))$ with $\lambda > 0$. Thus  people with higher income are more likely to get approved ($Y = 1$). Here $\lambda$ controls the extent to which loan approval $Y$ depends on income $X$. Since income $X$ solely depends on geolocation $Z$, $\lambda$ indirectly controls the dependence of the outcome $Y$ on geolocation $Z$ and thus on the corresponding race proxy probabilities $(\pr (A = a \mid Z), \pr (A = b \mid Z))$.

In each experiment, we set the total population size of neighborhoods $z_1, z_2, z_3$ to be 3000, 4000, and 5000 respectively. Each experiment is repeated $30$ times and the average estimated demographic disparity from the thresholded estimator, the weighted estimator, and using the true race is shown in Figure \ref{figure: synthetic}.

\begin{experiment}
\label{exp:intra}
The income is normally distributed according to $\mathcal{N}(\mu_X, 0.25)$ with mean value $\mu_X$ depending on both geolocation and race:
\begin{center}
\begin{tabular}{|c||c|c|c|}
\hline
$\mu_{X}$ & $Z=z_{1}$ & $Z=z_{2}$ & $Z=z_{3}$\tabularnewline
\hline
\hline
$A=a$ & $1+d$ & $2+d$ & $3+d$\tabularnewline
\hline
$A=b$ & $1-d$ & $2-d$ & $3-d$\tabularnewline
\hline
\end{tabular}
\end{center}
where $d$ controls the discrepancy of income $X$ between the two race groups within the same geolocation.
We fix $\lambda = 1$ and vary $d$ from $-0.5$ to $0.5$, which indirectly varies the magnitude of intra-geolocation decision outcome variation between the two races.
By construction, $d$ should affect only the $\Delta_1(a), \Delta_1(b)$ in \eqref{eq:condition1}---\eqref{eq:condition2}.
\end{experiment}

\begin{experiment}
\label{exp:inter}
The income is normally distributed according to $\mathcal{N}(\mu_X, 0.25)$ with the mean value $\mu_X$  depending on only geolocation:
\begin{center}
\begin{tabular}{|c||c|c|c|}
\hline
 & $Z=z_{1}$ & $Z=z_{2}$ & $Z=z_{3}$\tabularnewline
\hline
\hline
$\mu_{X}$ & $1$ & $2$ & $3$\tabularnewline
\hline
\end{tabular}
\end{center}
In other words, people with different races within the same geolocation have the same income distribution and thus the same decision outcome distribution. Therefore, there is no \textit{intra-geolocation} decision outcome variation between the two races. As a result, when we vary $\lambda$ from $0.2$ to $2$, only \textit{inter-geolocation} outcome variation changes, which affects $\Delta_2(a), \Delta_2(b)$ in conditions \eqref{eq:condition3}---\eqref{eq:condition4}.
\end{experiment}

\begin{figure}
\centering
\includegraphics[width = \linewidth]{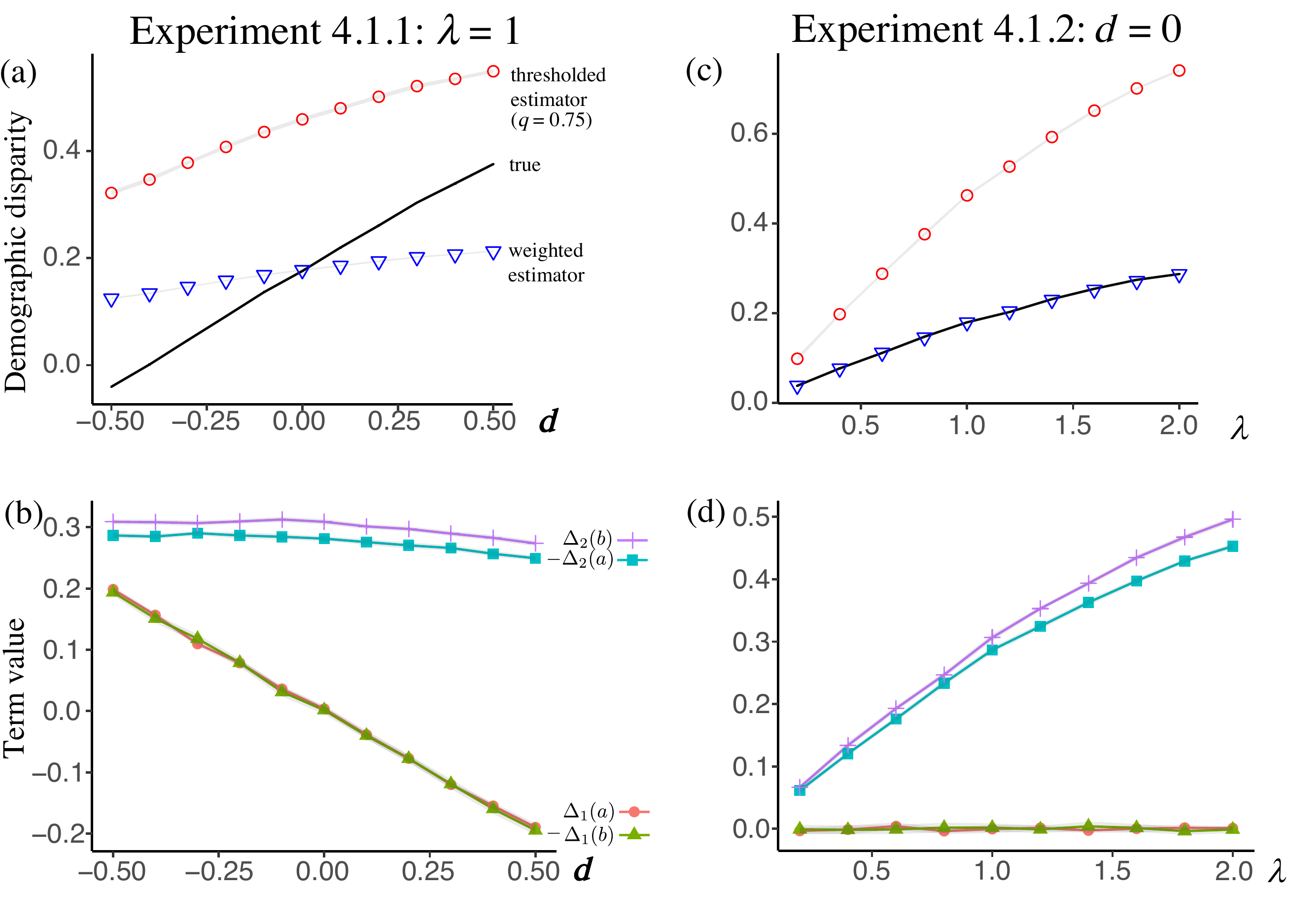}
\caption{Top row: performance of the thresholded ($\hat{\delta}_{q=0.75}$, red circles) and weighted ($\hat{\delta}_W$, blue triangles) estimators of demographic disparity in Experiments \ref{exp:intra} (a) and \ref{exp:inter} (c), relative to the true demographic disparity (black line).
Bottom row: values of the terms $\Delta_1(a)$, $-\Delta_1(b)$, $\Delta_2(a)$, and $-\Delta_2(b)$ in Corollary \ref{corollary: demographic-est-race} in Experiments \ref{exp:intra} (b) and \ref{exp:inter} (d), showing their contributions to the bias.
Shown in light gray is the confidence interval within two standard deviations,
averaged over 30 simulations, which in all cases is comparable to the width of the plotted lines.
}\label{figure: synthetic}
\end{figure}

Figure \ref{figure: synthetic}(b) shows that in Experiment \ref{exp:intra}, only $\Delta_1\!(a)$ and $\Delta_1\!(b)$ vary strongly with $d$, so they are largely responsible for the bias observed in Figure \ref{figure: synthetic}(a).
When $d < 0$, the disadvantaged group have comparatively higher incomes and are thus more likely to be approved in loan application.
This implies that conditions \eqref{eq:condition1}---\eqref{eq:condition2} hold and therefore result in overestimation bias for the thresholded estimator.
Furthermore, $Y$ covaries negatively with $\ind(A=a)$ and positively with $\ind(A=b)$ conditionally on $Z$, also resulting in an overestimation bias for the weighted estimator.
When $d > 0$, however, conditions \eqref{eq:condition1}---\eqref{eq:condition2} are violated.
The terms $\Delta_1(a)$ and $\Delta_1(b)$ start to counteract overestimation bias, thus the overestimation bias decreases with $d$.
Furthermore, $Y$ now covaries positively with $\ind(A=a)$ and negatively with $\ind(A=b)$, also resulting in an underestimation bias for the weighted estimator.

Figure \ref{figure: synthetic}(d) shows that in Experiment \ref{exp:inter}, only the terms $\Delta_2(a)$ and $\Delta_2(b)$ vary strongly with $\lambda$, so these terms are responsible for the variation observed in the thresholded estimator $\hat\delta_q$ in Figure \ref{figure: synthetic}(c).
As $\lambda$ increases, both $-\Delta_2(a)$ and $\Delta_2(b) > 0$ increase, along with the overestimation bias of $\hat\delta_q$.
In contrast, the weighted estimator $\hat\delta_W$ is unbiased because the income distribution does not depend on race, and so $Y$ is independent with $\ind(A = a)$ and $\ind(A=b)$ conditionally on $Z$.

While presented only for the particular choices of $\lambda=1$ for Experiment \ref{exp:intra} and $d=0$ for Experiment \ref{exp:inter}, the observation of which terms vary strongly with $d$ and $\lambda$ hold true for quite a few different choices.

\subsection{Estimation biases in the HMDA mortgage data set with geolocation proxy for race}
\label{sec:mortgage}

In this section, we use the public HMDA (Home Mortgage Disclosure Act) data set\footnote{Data link: \url{https://www.consumerfinance.gov/data-research/hmda/explore}.} to demonstrate demographic disparity estimation bias when using a probabilistic proxy for race.
This data set contains mortgage loan application records in the U.S.\ for which the geolocation (state, county, and census tract), self-reported race/\-ethnicity, and loan origination outcome were reported, and
has been used in the literature to evaluate the BISG race proxy \cite{Baines:2014aa, CFPB:proxy2014, zhang2016assessing}.
We use the loan data for the years 2011---2012, consistent with \cite{CFPB:proxy2014}.
We denote ${Y} = 1$ if a loan application was approved or originated, and ${Y} = 0$ if it was denied. The final sample contains around 17 million observations with non-missing geolocation, race, and loan origination outcome information.

We consider a race proxy based only on geolocation, as the public data set is anonymized and omits surnames.
This proxy is derived from the racial and ethnic composition of the U.S.\ population that is over 18 years of age,
using the census tracts of the 2010 decennial census\footnote{We use the census tract level geolocation-only proxy constructed by the CFPB, as described in \url{https://github.com/cfpb/proxy-methodology}.}.
For example, the proportion of Hispanic, white, black, and API (Asian or Pacific Islander) in census tract 020100, Autauga County, Alabama are 0.02, 0.86, 0.10, 0.01 respectively, according to the 2010 census.
This quadruple is assigned to all applicants from this census tract as their race proxy.
This probabilistic proxy, while different from BISG, is nevertheless sufficient to demonstrate the general result regarding disparity estimation bias in Section \ref{sec:theorems}.
We present results for Hispanic, white, and black subpopulations in this section, and defer the result for API to the Appendix.

\begin{figure}
\centering
\includegraphics[width=\linewidth]{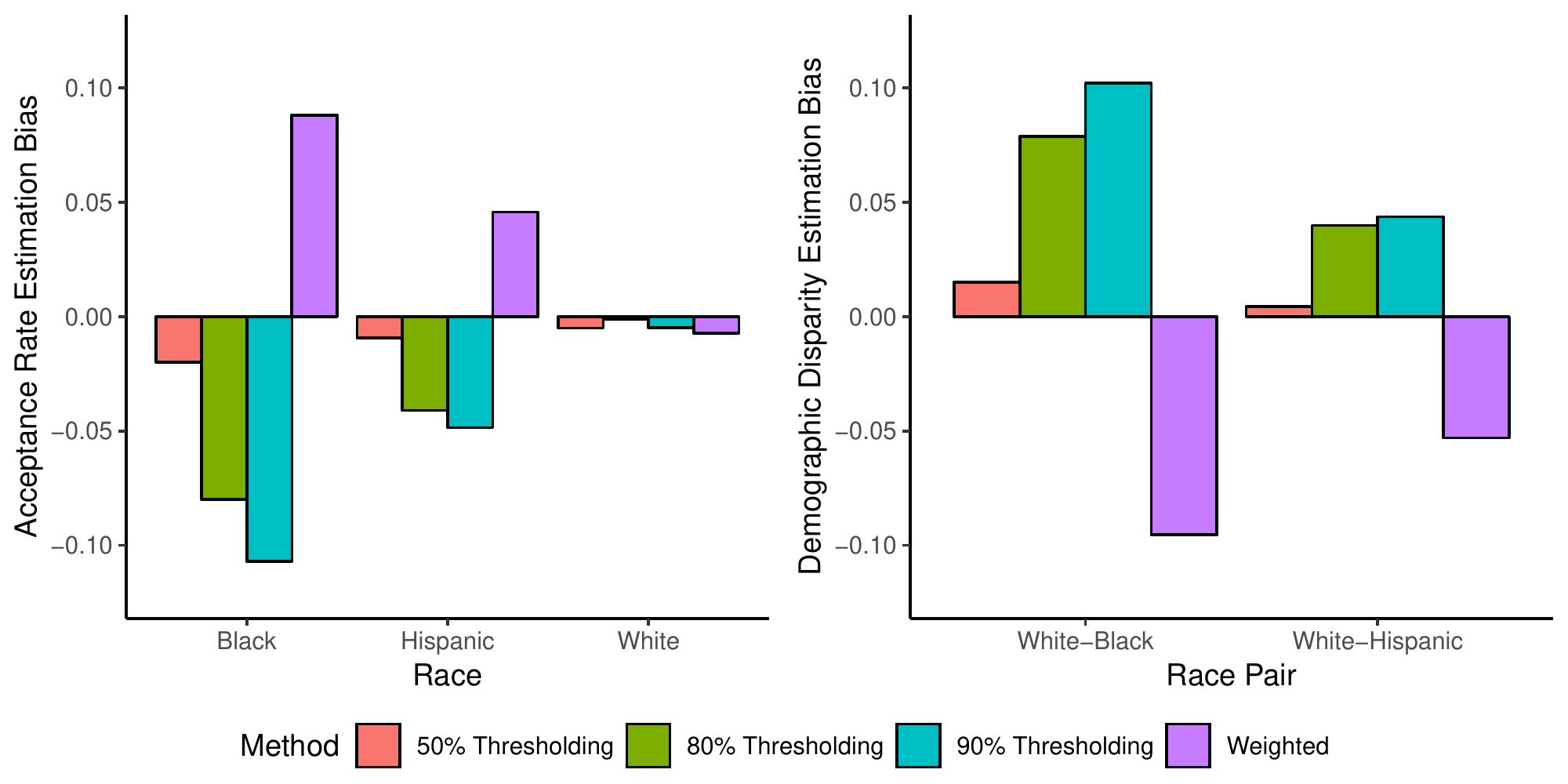}
\caption{Left: estimation biases of loan acceptance rates $\mu$ for different races in the HMDA data set of Section \ref{sec:mortgage}, using the thresholded estimator for mean group outcome $\hat\mu_q$ ($q=0.5, 0.8, 0.9$) from Definition \ref{def:thresholded_estimator}, as well as the weighted estimator $\hat\mu_W$ from Definition \ref{def:weighted_estimator}, relative to the true mean group outcome $\mu$ calculated using the actual race labels.
Right: estimation biases of demographic disparity $\delta$ between pairs of races, using the thresholded estimator $\hat\delta_q$ from Definition \ref{def:thresholded_estimator} and weighted estimator $\hat\delta_W$from Definition \ref{def:weighted_estimator}, relative to the true demographic disparity $\delta$ calculated using the actual race labels.}\label{figure: mortgage_demo_disparity}
\end{figure}

\paragraph{Estimation bias of thresholded estimator and weighted estimator}
In Figure \ref{figure: mortgage_demo_disparity}, we show the estimation bias of the thresholded estimator with different thresholds and the weighted estimator.
Clearly the thresholded estimator underestimates the loan acceptance rate of black and Hispanic groups but it estimates the loan acceptance rate for White group accurately.
As a result, the thresholded estimator overestimates the demographic disparity.
Moreover, the overestimation bias tends to decrease as the threshold $q$ decreases.
In contrast, the weighted estimator displays the opposite performance and underestimates the demographic disparity.

\begin{figure}
\centering
\includegraphics[width=\linewidth]{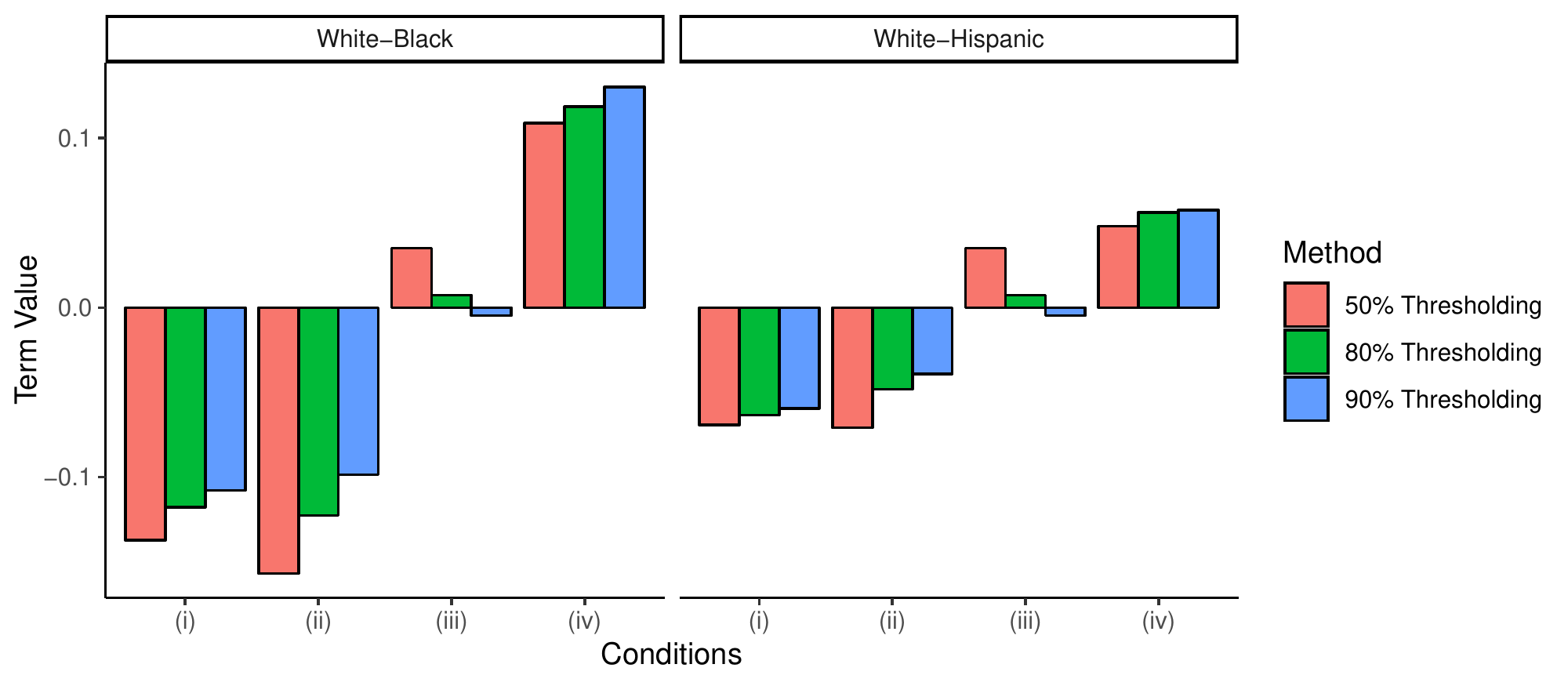}
\caption{The values of the terms $\Delta_1(a)$, $-\Delta_1(b)$, $\Delta_2(a)$, and $-\Delta_2(b)$ in Corollary \ref{corollary: demographic-est-race} for the mortgage data set of Section \ref{sec:mortgage}, for thresholds $q=$ 0.5, 0.8, and 0.9.
Negative values demonstrate partial violation of the conditions of Corollary \ref{corollary: demographic-est-race}.} \label{figure: mortgage_demo_disparity_conditions}
\end{figure}

\begin{figure}
\centering
\includegraphics[width=\linewidth]{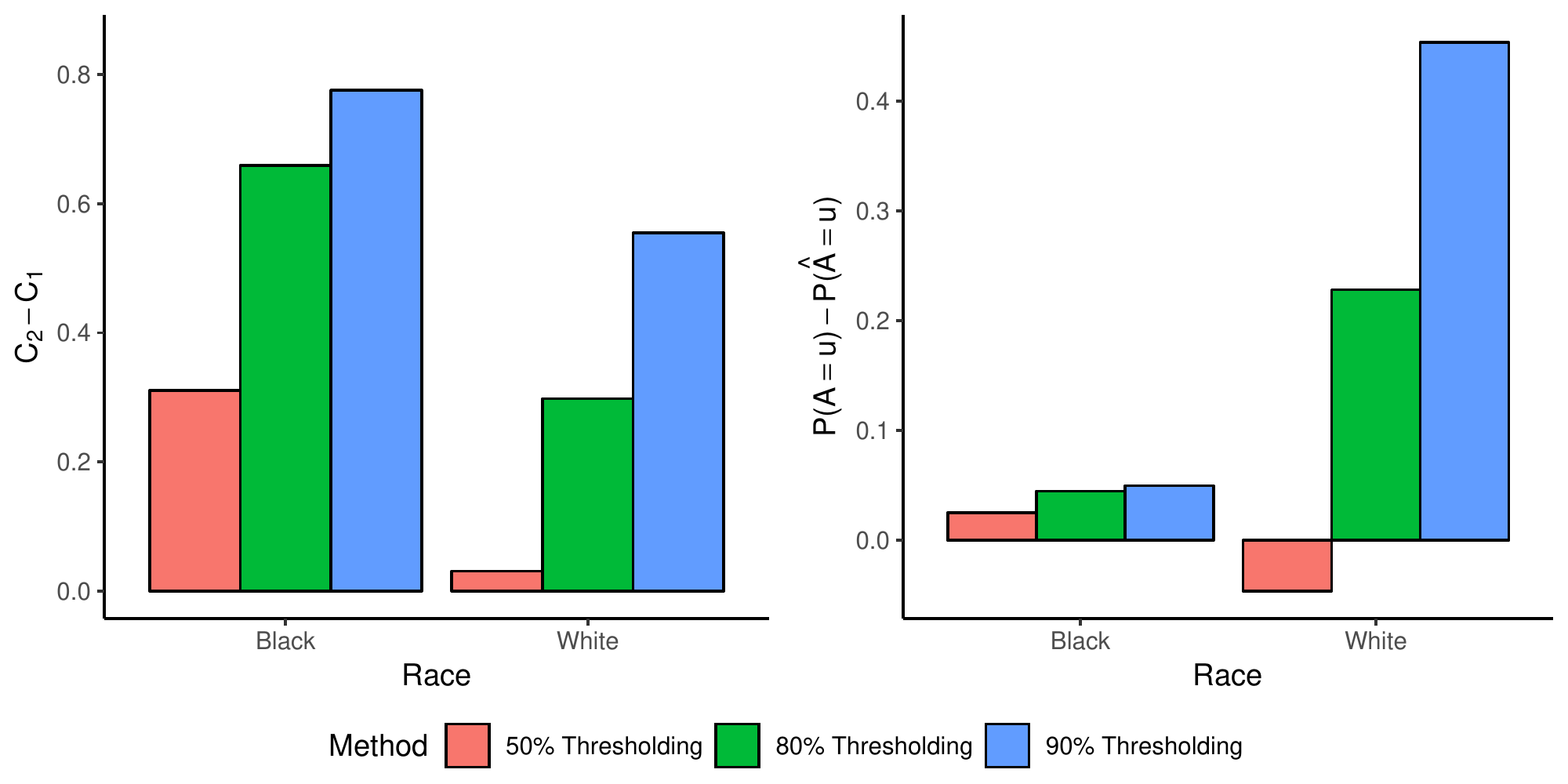}
\caption{Left: the difference between the terms $C_2(u)$ and $C_1(u)$ in Theorem \ref{thm: demographic-est-race-bias} for races $u=$ black or white and thresholds $q=$ 0.5, 0.8 and 0.9 for estimating the race labels according to the thresholded rule of Definition \ref{def:thresholding_rule}.
As $q$ increases, $C_2(u)$ is increasingly larger than $C_1(u)$, and thus the $\Delta_2(u)$ terms contribute more to the estimation bias than the $\Delta_1(u)$ terms.
Right: difference between the probabilities of the true and estimated race labels, $\pr (A = u) - \pr (\hat{A} = u)$. As $q$ increases from 0.5 to 0.9, the proportion of unclassified samples increases from 0.065 to 0.63, which causes $\pr (\hat{A} = u)$ to decrease drastically. As a result, $C_2(u) - C_1(u)$ increases due to Corollary \ref{corollary: C-functions}.}\label{figure: C_values}
\end{figure}

\paragraph{Bias source of thresholded estimator}
Figure \ref{figure: mortgage_demo_disparity_conditions} shows the conditions  \eqref{eq:condition1}---\eqref{eq:condition4}.
Only \eqref{eq:condition4} strictly holds, meaning that people from the disadvantaged group (Hispanic or black) living in census tracts where their race ratio exceeds the corresponding classification threshold have lower average loan acceptance rate than people from the disadvantaged group living in census tracts where their race ratio is not as high as the classification threshold.
This captures the \textit{inter-geolocation} outcome variation: the loan approval is negatively correlated with the disadvantaged group prevalence across the geolocations.
In Example \ref{ex:income}, we give a reason why loan acceptance is correlated with the race proxy probabilities: geolocation is correlated with both race proportions (i.e., race proxy probabilities) and socioeconomic status (e.g., income, FICO score, etc.) that affects loan approval. In Appendix \ref{section:correlation_income}, we validate that such correlations are indeed obvious in the mortgage data set.

In contrast to conditions \eqref{eq:condition4}, conditions \eqref{eq:condition1}---\eqref{eq:condition2} are strictly violated, and condition \eqref{eq:condition3} is slightly violated.
However, the thresholded estimators still have overestimation bias because \eqref{eq:condition3}---\eqref{eq:condition4} have dominant effects, as shown in Figure \ref{figure: C_values}.
Moreover, as the threshold $q$ increases, conditions \eqref{eq:condition3}---\eqref{eq:condition4} start to dominate, increasing the overestimation bias.
Nevertheless, the apparent reduction of overestimation bias when using lower threshold does not mean that using lower threshold is better in practice. Instead, the reduction of overestimation bias is the consequence of delicate counterbalance between the two opposing bias sources: the violation of conditions \eqref{eq:condition1}---\eqref{eq:condition2} contributes to underestimation bias and the conditions \eqref{eq:condition3}---\eqref{eq:condition4} contribute to overestimation bias. Thus, the smaller bias from using a lower threshold is not a robust finding. For example, for the experiments in Section \ref{sec:numerics}, changing the threshold from $0.75$ to $0.5$ does not affect either the race imputation result or the demographic disparity estimation bias. This reflects the intrinsic complexity of the thresholded estimator.

In summary, according to Corollary \ref{corollary: C-functions}, the fact that the thresholded estimation rule for race excludes many unclassified examples makes the overestimation predominantly be determined by the inter-geolocation outcome variation captured by conditions \eqref{eq:condition3}---\eqref{eq:condition4}, especially when the threshold $q$ is very high. Moreover, strong racial segregation and socioeconomic status disparities across different geolocations make condition \eqref{eq:condition3} or \eqref{eq:condition4} (if not both) very likely to hold. As a result, the thresholded estimator tends to overestimate the demographic disparity, especially when high threshold is used. Since BISG also uses geolocation for race estimation, the reported overestimation bias of using thresholded estimator with BISG should be at least largely (if not all) due to the same reason. Moreover, the estimation bias of thresholded can be very sensitive to the threshold value because the threshold value influences the interplay of the bias sources in \eqref{eq:condition1}---\eqref{eq:condition4}. This reflects the intrinsic limitation of the thresholded estimator.


\begin{figure}
  \centering
    \includegraphics[width=\linewidth]{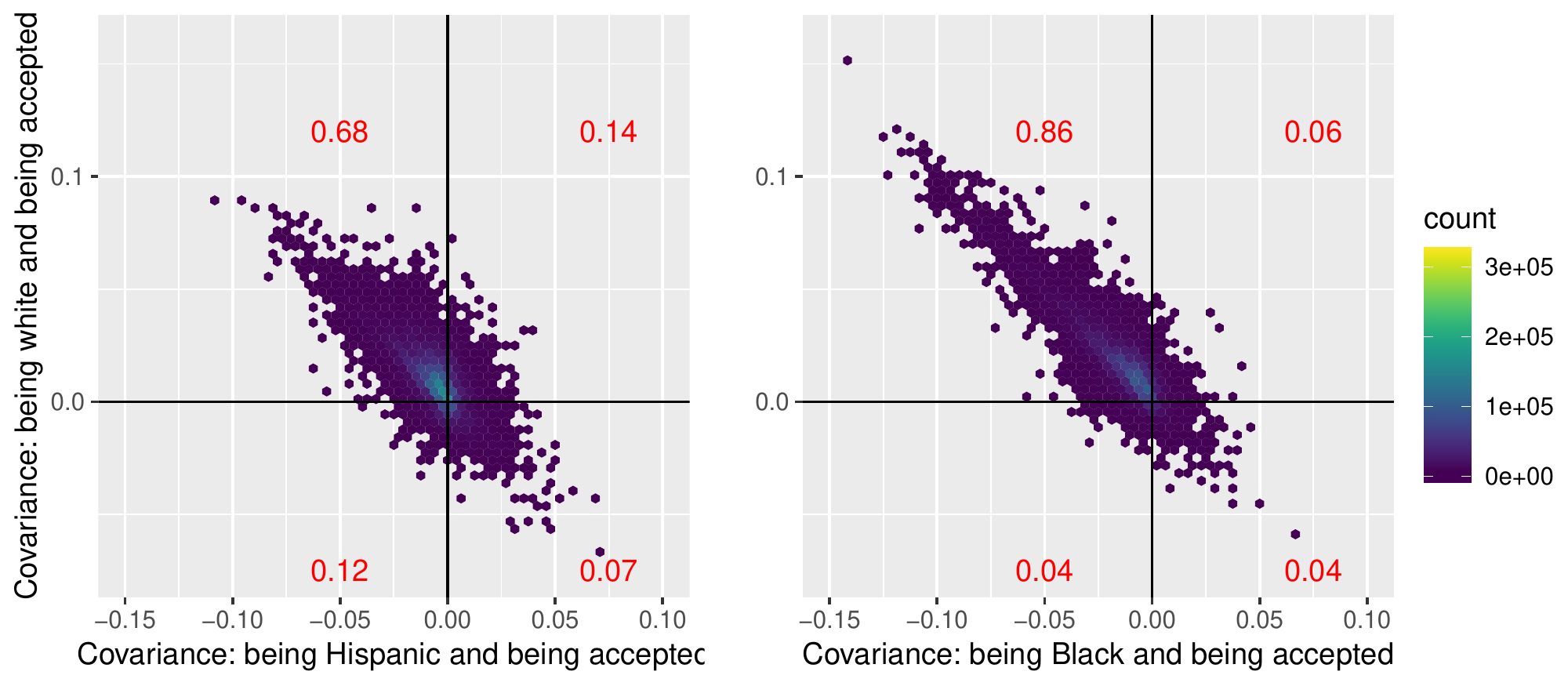}
  \caption{The population distribution across different combinations of within-census-tract-covariances between the loan acceptance and the membership to advantaged group or the membership to disadvantaged group. The red numbers in each quadrant represents the proportion of people who live in census tracts with the corresponding covariance combination.} \label{figure: mortgage_weighted_conditions}
\end{figure}

\paragraph{Bias source of the weighted estimator}
Figure \ref{figure: mortgage_weighted_conditions} shows the distribution of population who live in census tracts with different combinations of $\cov (\ind(A = a), {Y} \mid Z)$ and $\cov (\ind(A = b), {Y} \mid Z)$ with $a$ representing white and $b$ representing black or Hispanic.
Theorem \ref{thm: weighted} implies that the upper left quadrants account for the underestimation bias of weighted estimator for demographic disparity,
as these quadrants contain all census tracts where the loan acceptance is positively correlated with membership to the advantaged group ($\cov (\ind(A = a), {Y} \mid Z) > 0$),
while negatively correlated with the membership to the disadvantaged group ($\cov (\ind(A = b), {Y} \mid Z) < 0$).
Figure \ref{figure: mortgage_weighted_conditions} shows that the majority of people live in census tracts falling in the upper left quadrants.
In other words, most people live in census tracts where white people have a higher average loan approval rate than their black or Hispanic neighbors. This explains the underestimation bias of the weighted estimator since the estimation bias of the weighted estimator is solely determined by the intra-geolocation outcome variation according to Theorem \ref{thm: weighted}.


\paragraph{Summary}
The empirical results show that our theoretical analysis provides convincing explanations for the observed estimation bias of the thresholded estimator and the weighted estimator. We show that the bias sources of the weighted estimator and the thresholded estimator are very different: the bias of the weighted estimator is solely determined by the intra-geolocation variation, and the bias of the thresholded estimator is determined by both inter-geolocation variation and intra-geolocation variation. When high threshold is used, the inter-geolocation variation dominates.
In the mortgage dataset, we show that the racial segregation, socioeconomic status, outcome disparity pattern with respect to geolocation make the thresholded estimator tend to overestimate the demographic disparity, and the weighted estimator tend to underestimate the demographic disparity.
This explains the overestimation bias of the thresholded estimator with BISG reported by previous literature.
Moreover, our results show that the estimation bias of thresholded estimator is sensitive to the threshold because of complex interplay of different bias sources.
As a result, the observed estimation bias pattern in one setting may hardly generalize to other settings.
In contrast, the weighted estimator only has one bias source, and is thus easier to reason about. 




\section{Conclusions}

This paper presents the first theoretical analysis of bias in outcome disparity assessments using a probabilistic proxy model for the unobserved protected class.
In Theorem \ref{thm: demographic-est-race-bias}, we derived the bias of a thresholded estimator (Definition \ref{def:thresholded_estimator}) that has been described in the literature.
We also gave sufficient conditions in Corollary \ref{corollary: demographic-est-race} to understand when this methodology is biased,
and to what extent.
Our theoretical analysis is valid whenever a proxy model is used with a thresholded estimator to impute protected class membership,
and is thus consistent with previous studies that had observed an overestimation bias of the thresholded estimator based on BISG.
When applied to the public HMDA mortgage dataset with geolocation as the sole proxy for race membership (Section \ref{sec:mortgage}),
we found that the estimation bias of the thresholded estimator depends on the complex interaction of multiple different biases,
producing a strong sensitivity to the precise value of the threshold used.
Further studies will be needed to demonstrate the robustness of this numerical finding,
particularly when using other proxy models and other measures of fairness.
Nevertheless, our work signals caution for choosing the \textit{ad hoc} thresholds which are used in practice,
as without the ground truth labels, we are unable to determine the optimal choice of threshold. 

To alleviate the theoretical challenges of the thresholded estimator, we also proposed a weighted estimator (Definition 2.5),
which propagated the uncertainty resulting from the probabilistic proxy onto the final estimand.
We found that the estimation bias of this weighted estimator has only one bias source,
arising from the loan approval discrepancy between difference races within the same geolocation,
which led to an overall underestimate.
As the behavior of this estimator's bias is simpler to reason about, 
we believe that the weighted estimator may be a useful new method to incorporate into outcome disparity evaluations when proxy models are used,
especially if the sign of estimation bias can be determined with external knowledge.

\bibliographystyle{ACM-Reference-Format}
\bibliography{bib}

\clearpage
\begin{appendices}
  \section{Proofs for Section 3}

\begin{proof}[Proof of Theorem \ref{thm: weighted}]
By the strong law of large numbers, we have the almost sure convergence for the estimator
\begin{align*}
    \hat{\mu}_{W}(a) &=  \frac{\frac{1}{N}\sum_{i = 1}^N \pr(A_i = a \mid Z_i)Y_i}{\frac{1}{N}\sum_{i = 1}^N \pr(A_i = a \mid Z_i)} \\
    &\overset{\text{a.s.}}{\to} \frac{\expect [\pr (A = a \mid Z)Y]}{\expect [\pr (A = a \mid Z)]} \\
    &= \frac{\expect [\expect (\ind (A = a) \mid Z) \expect (Y \mid Z)]}{\pr (A = a)}.
\end{align*}
For the true mean group outcome, introduce the trivial conditional expectation with respect to $Z$:
\begin{align*}
    \mu(a) = \frac{\expect[\ind (A = a)Y]}{\pr (A = a)} = \frac{\expect [\expect (\ind (A = a)Y \mid Z)]}{\pr (A = a)}
\end{align*}
Thus, we can regroup the terms in the difference to recognize the conditional covariance
\begin{align*}
    &\hat{\mu}_{W}(a) - \mu(a) \\
    \overset{\text{a.s.}}{\to} &-\frac{\expect [\expect (\ind (A = a)Y \mid Z) - \expect (\ind (A = a) \mid Z) \expect(Y \mid Z)]}{\pr (A = a)} \\
    = & -\frac{\expect [\cov (\ind(A = a), Y \mid Z)]}{\pr (A = a)}.
\end{align*}
The analogous results hold for $A=b$ everywhere.
\end{proof}

\begin{proof}[Proof of Corollary \ref{corr:weighted-unbiased}]
When $Y$ is independent of $A$ conditionally on $Z$, it immediately follows that
\[
\cov (\ind(A = a), Y \mid Z) = \cov (\ind(A = b), Y \mid Z) = 0,
\]
and thus
\begin{align*}
    \hat{\mu}_{W}(a) - \mu(a) &\overset{\text{a.s.}}{\to} 0\\
    \hat{\mu}_{W}(b) - \mu(b) &\overset{\text{a.s.}}{\to} 0 \\
\therefore \hat{\delta}_{W} - \delta &\overset{\text{a.s.}}{\to} 0
\end{align*}
\end{proof}

\begin{proof}[Proof of Theorem \ref{thm: demographic-est-race-bias}]
Define the following events for $u_1, u_2 \in \{a, b\}$:
\begin{align*}
 \mathcal{E}_q^+(u_1, u_2) &= \{\pr (A = u_1 \mid Z) > q, A = u_2\} \\
 \mathcal{E}_q^-(u_1, u_2) &= \{ \pr (A = u_1 \mid Z) \le q, A = u_2\}.
\end{align*}
Then
\begin{align*}
    \{A = a\} &= \mathcal{E}_q^+(a, a) \cup \mathcal{E}_q^-(a, a), \\
    \{\hat{A} = a\} &= \mathcal{E}_q^+(a, a) \cup \mathcal{E}_q^+(a, b),
\end{align*}
where $\hat{A}$ is the estimated protected attribute according to the thresholding rule (Definition \ref{def:thresholding_rule}).

It follows that
\begin{align*}
    \mu(a) &= \frac{\expect [\ind (A = a)Y]}{\pr(A = a)} \\
    &= \frac{\expect [\ind (\mathcal{E}_q^+(a, a))Y] + \expect [\ind (\mathcal{E}_q^-(a, a))Y]}{\pr(\mathcal{E}_q^+(a, a)) + \pr(\mathcal{E}_q^-(a, a))} \\
    \hat{\mu}_W(a) &\overset{\text{a.s.}}{\to} \frac{\expect [\ind (\hat{A} = a)Y]}{\pr(\hat{A} = a)}  \\
    &= \frac{\expect [\ind (\mathcal{E}_q^+(a, a))Y] + \expect [\ind (\mathcal{E}_q^+(a, b))Y]}{\pr(\mathcal{E}_q^+(a, a)) + \pr(\mathcal{E}_q^+(a, b))}.
\end{align*}

Then simple algebra shows that
\begin{align*}
    \hat{\mu}_W(a) - \mu(a) &\overset{\text{a.s.}}{\to} \Delta_1(a)C_1(a) - \Delta_2(a)C_2(a) \\
    &+ (\Delta_1(a) - \Delta_2(a))C_3(a),
\end{align*}
where
\begin{align*}
    \Delta_1(a) = \expect[Y \mid \mathcal{E}_q^+(a, b)] - \expect[Y \mid \mathcal{E}_q^+(a, a)], \\
    \Delta_2(a) = \expect[Y \mid \mathcal{E}_q^-(a, a)] - \expect[Y \mid \mathcal{E}_q^+(a, a)],
\end{align*}
and
\begin{align*}
    C_1(a) = \pr(\hat{A} = a \mid A = a)\pr(A = b \mid \hat{A} = a), \\
    C_2(a) = \pr(A = a \mid \hat{A} = a)\pr(\hat{A} \ne a \mid A = a), \\
    C_3(a) = \pr(\hat{A} \ne a \mid A = a)\pr(A = b \mid \hat{A} = a).
\end{align*}
Similarly, we can prove that
\begin{align*}
    \hat{\mu}_W(b) - \mu(b) &\overset{\text{a.s.}}{\to} \Delta_1(b)C_1(b) - \Delta_2(b)C_2(b) \\
    &+ (\Delta_1(b) - \Delta_2(b))C_3(b),
\end{align*}
where $\Delta_1(b)$, $\Delta_2(b)$, $C_1(b)$, $C_2(b)$, $C_3(b)$ are defined analogously.

In summary, as ${N \to \infty}$, for $u \in \{a, b\}$ and $u^c$ as the class opposite to $u$, ,
\begin{align*}
    \hat{\mu}_{q}(u) - \mu(u) \to &\;\Delta_1(u)C_1(u) - \Delta_2(u)C_2(u) \\
    &+ (\Delta_1(u) - \Delta_2(u))C_3(u),
\end{align*}
where $\Delta_1(u)$ and $\Delta_2(u)$ are defined in Section 3.2, and

\begin{align*}
    C_1(u) &= \pr(\hat{A} = u \mid A = u)\pr(A = u^c \mid \hat{A} = u), \\
    C_2(u) &= \pr(A = u \mid \hat{A} = u)\pr(\hat{A} \ne u \mid A = u), \\
    C_3(u) &= \pr(\hat{A} \ne u \mid A = u)\pr(A = u^c \mid \hat{A} = u).
\end{align*}


\end{proof}

\begin{proof}[Proof of Corollary \ref{corollary: demographic-est-race}]
The conclusions follow immediately from Theorem \ref{thm: demographic-est-race-bias}.
\end{proof}

\begin{proof}[Proof of Corollary \ref{corollary: C-functions}]
First, (i) is obvious according to the formulas of $C_2(u), C_3(u)$ in Theorem \ref{thm: demographic-est-race-bias}.

Second, note that
\begin{align*}
C_1(u) &= \pr(\hat{A} = u \mid A = u)(1 - \pr(A = u \mid \hat{A} = u)), \\
C_2(u) &= \pr(A = u \mid \hat{A} = u)(1 - \pr(\hat{A} = u \mid A = u)).
\end{align*}

Thus
\begin{align*}
C_2(u) - C_1(u) &= \pr(A = u \mid \hat{A} = u) - \pr(\hat{A} = u \mid A = u) \\
                &= \pr(A = u, \hat{A} = u)[\frac{1}{\pr (\hat{A} = u)} - \frac{1}{\pr ({A} = u)}] \\
                &= \frac{\pr(A = u, \hat{A} = u)}{\pr (\hat{A} = u) \pr ({A} = u)}
                (\pr ({A} = u) - \pr (\hat{A} = u)).
\end{align*}

Therefore, $\pr (\hat{A} = u) < \pr ({A} = u)$ if and only if
\[
C_2(u) > C_1(u).
\]

\end{proof}

\section{Multilevel unknown protected class}
In this section, we suppose that the protected class $A$ can have more than two values and the value space of $A$ is denoted as $\mathcal{A}$. Denote $\mathcal{A}' = \mathcal{A}\backslash\{a, b\}$, i.e., $\mathcal{A}'$ is the set of all values of $A$ other than $a$ and $b$.

\begin{corollary} \label{corollary: C-functions-multiclass}
The quantities $C_1(u), C_2(u), C_3(u)$ for $u \in \{a, b\}$ in Theorem \ref{thm: demographic-est-race-bias} are related in the following way:
\begin{enumerate}[label=(\roman*)]
    \item If $\pr (A = u \mid \hat{A} = u) > \pr (A = u^c \mid \hat{A} = u)$, then $C_2(u) > C_3(u)$
    \item If $\pr (A = u) + \pr (A \in \mathcal{A}', \hat{A} = u)> \pr (\hat{A} = u)$, then $C_2(u) > C_1(u)$.
\end{enumerate}
Thus if the conditions in (i) and (ii) both hold, then $C_2(u) > C_1(u)$ and $C_2(u) > C_3(u)$.
\end{corollary}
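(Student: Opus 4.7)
The plan is to prove both parts by direct manipulation of the formulas for $C_1(u), C_2(u), C_3(u)$ given in Theorem \ref{thm: demographic-est-race-bias}, carefully tracking the extra contribution from the additional classes $\mathcal{A}'$ that was absent in the binary setting of Corollary \ref{corollary: C-functions}.

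For part (i), I would note that $C_2(u) = \pr(A = u \mid \hat{A} = u)\pr(\hat{A} \ne u \mid A = u)$ and $C_3(u) = \pr(\hat{A} \ne u \mid A = u)\pr(A = u^c \mid \hat{A} = u)$ share the common factor $\pr(\hat{A} \ne u \mid A = u)$. Dividing through by this factor (the non-degenerate case), the inequality $C_2(u) > C_3(u)$ reduces immediately to $\pr(A = u \mid \hat{A} = u) > \pr(A = u^c \mid \hat{A} = u)$, which is exactly the hypothesis. This part is essentially the same as in the binary case.

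For part (ii), the novelty is that with multiclass $A$ we have $\pr(A = u^c \mid \hat{A} = u) = 1 - \pr(A = u \mid \hat{A} = u) - \pr(A \in \mathcal{A}' \mid \hat{A} = u)$, so the missing mass in $\mathcal{A}'$ must be accounted for. Introducing the shorthand $p = \pr(\hat{A} = u \mid A = u)$, $r = \pr(A = u \mid \hat{A} = u)$, and $s = \pr(A \in \mathcal{A}' \mid \hat{A} = u)$, I would write $C_1(u) = p(1 - r - s)$ and $C_2(u) = r(1 - p)$. A direct expansion gives $C_2(u) - C_1(u) = r - p + ps$, so the desired inequality $C_2(u) > C_1(u)$ is equivalent to $r + ps > p$. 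Multiplying both sides by $\pr(\hat{A} = u)$ converts the terms on the left to joint probabilities: $r\pr(\hat{A}=u) = \pr(A=u,\hat{A}=u)$ and $s\pr(\hat{A}=u) = \pr(A\in\mathcal{A}',\hat{A}=u)$. Then dividing by $p$ and using Bayes' rule in the form $\pr(A=u,\hat{A}=u)/p = \pr(A=u)$, the inequality rearranges exactly to $\pr(A = u) + \pr(A \in \mathcal{A}', \hat{A} = u) > \pr(\hat{A} = u)$, which is the assumed condition in (ii).

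The main obstacle will be keeping the probability rearrangements transparent, since the multiclass extension introduces the extra term $\pr(A \in \mathcal{A}', \hat{A} = u)$ that must be threaded through the Bayes-rule conversions correctly in order to match the exact form of the stated hypothesis. Once this term is tracked, no genuinely new idea beyond the binary proof is required, and the final combined claim $C_2(u) > C_1(u)$ and $C_2(u) > C_3(u)$ follows by simply invoking both parts together.
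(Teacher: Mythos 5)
Your argument is correct: part (i) follows by cancelling the common factor $\pr(\hat{A}\ne u\mid A=u)$, and for part (ii) your expansion $C_2(u)-C_1(u)=r-p+ps$ rearranges, via $r\,\pr(\hat{A}=u)=\pr(A=u,\hat{A}=u)$, $s\,\pr(\hat{A}=u)=\pr(A\in\mathcal{A}',\hat{A}=u)$, and $\pr(A=u,\hat{A}=u)/p=\pr(A=u)$, exactly into the stated condition. The paper proves only the binary Corollary \ref{corollary: C-functions} and states this multiclass version without proof; your derivation is the natural extension of that same computation with the extra $\pr(A\in\mathcal{A}',\hat{A}=u)$ mass correctly tracked (under the same tacit non-degeneracy assumptions, e.g.\ $\pr(\hat{A}=u\mid A=u)>0$, that the paper's binary argument also relies on).
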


This means in the multiclass case, $C_2(u) > C_1(u)$ can hold more easily: it can hold even if $\pr (A = u) < \pr (\hat{A} = u)$, which explains why in figure \ref{figure: C_values} $C_2(u) > C_1(u)$ even though $\pr (A = u) < \pr (\hat{A} = u)$ where $u$ stands for White.  



\section{Additional results on the mortgage dataset}
\subsection{Bias According to Theorem \ref{thm: demographic-est-race-bias}}\label{section:theory_bias}
\begin{figure}[H]
  \centering
    \includegraphics[width=\linewidth]{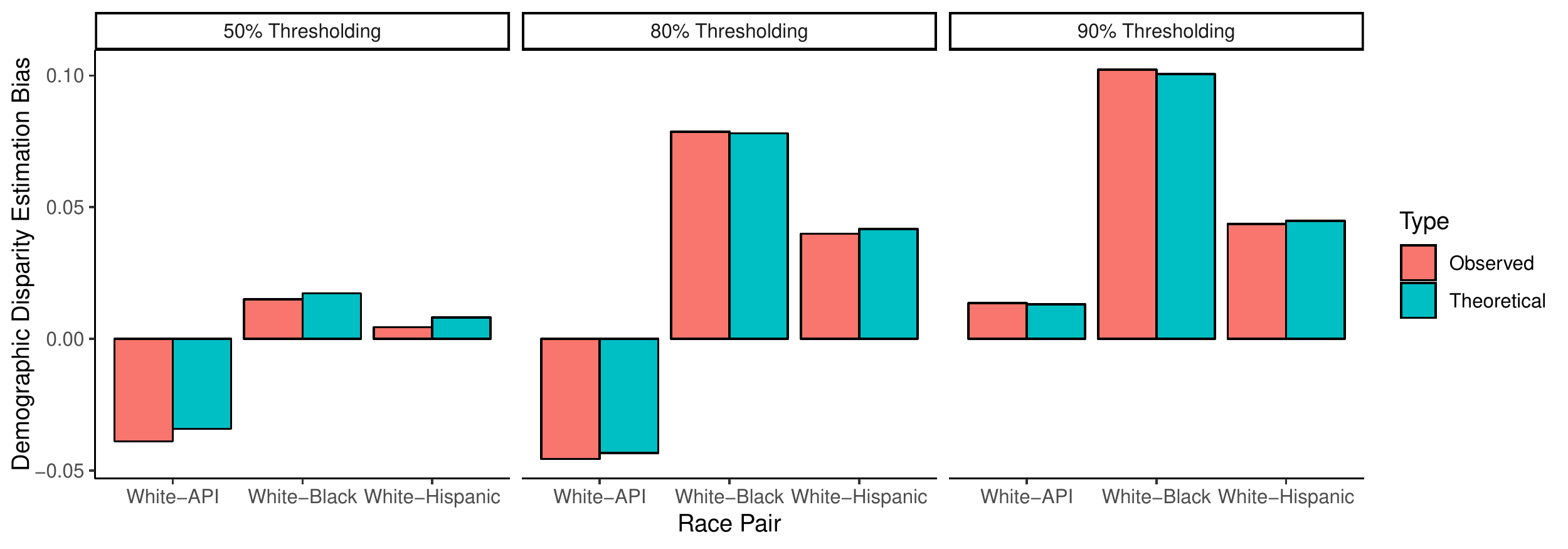}
  \caption{The observed bias is the actual estimation bias of the thresholded estimators with different thresholds, as shown in the right subfigure of Figure \ref{figure: mortgage_demo_disparity}. The theoretical bias is computed according to Theorem \ref{thm: demographic-est-race-bias}. This figure shows that the theoretical bias formula in Theorem \ref{thm: demographic-est-race-bias} for binary protected class approximates the observed bias for multiclass protected class very well. Therefore, Theorem \ref{thm: demographic-est-race-bias} indeed captures the main bias sources of thresholded estimators for both binary and multiclass protected class.} \label{figure: theory_bias}
\end{figure}

\subsection{Results for API}
\begin{figure}[H]
  \centering
    \includegraphics[width=\linewidth]{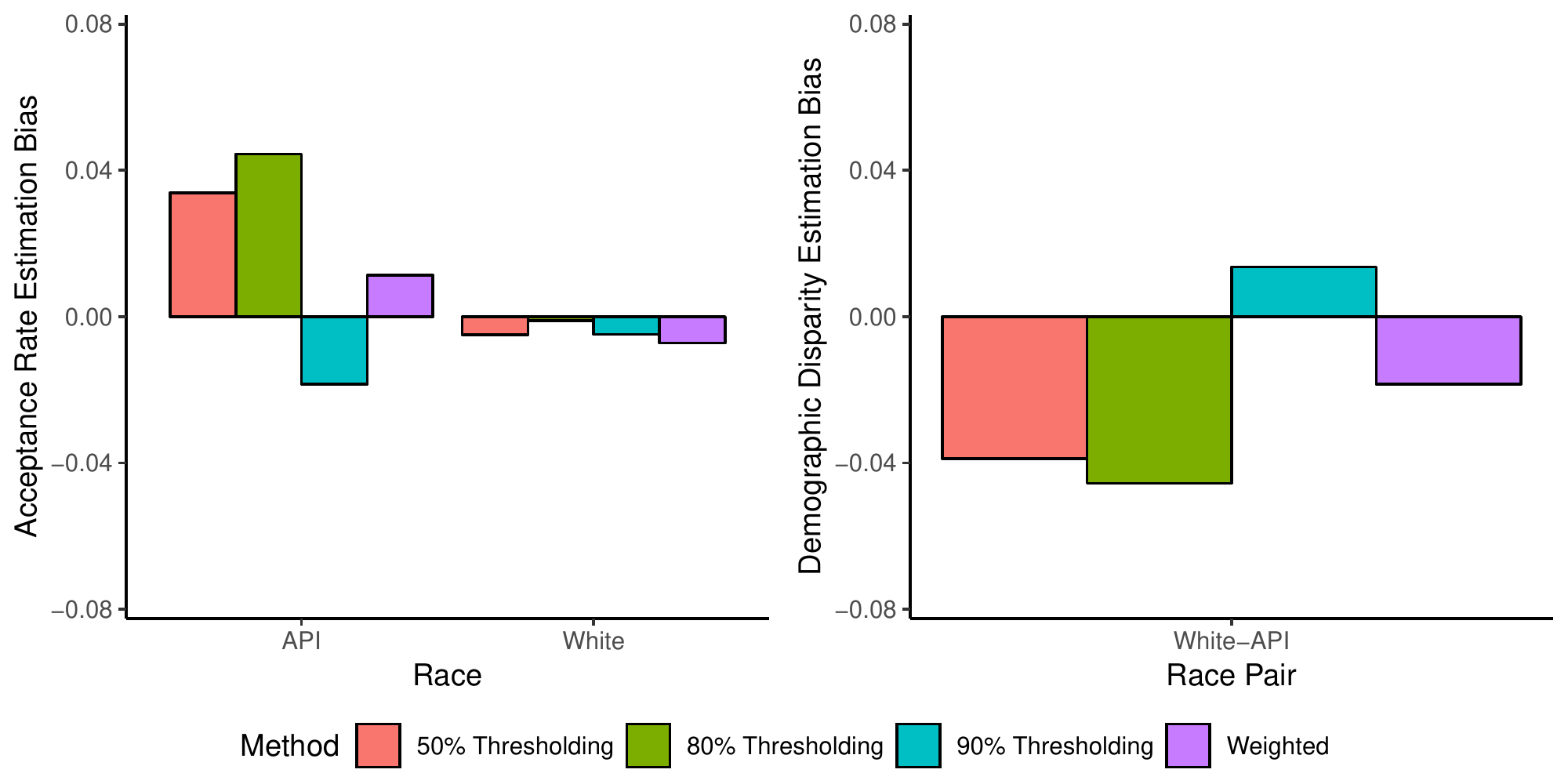}
  \caption{Left figure: the acceptance rate estimation bias for the API race. Right figure: demographic disparity estimation bias with respect to White and API.} \label{figure: mortgage_demo_api}
\end{figure}

\begin{figure}[H]
  \centering
    \includegraphics[width=\linewidth]{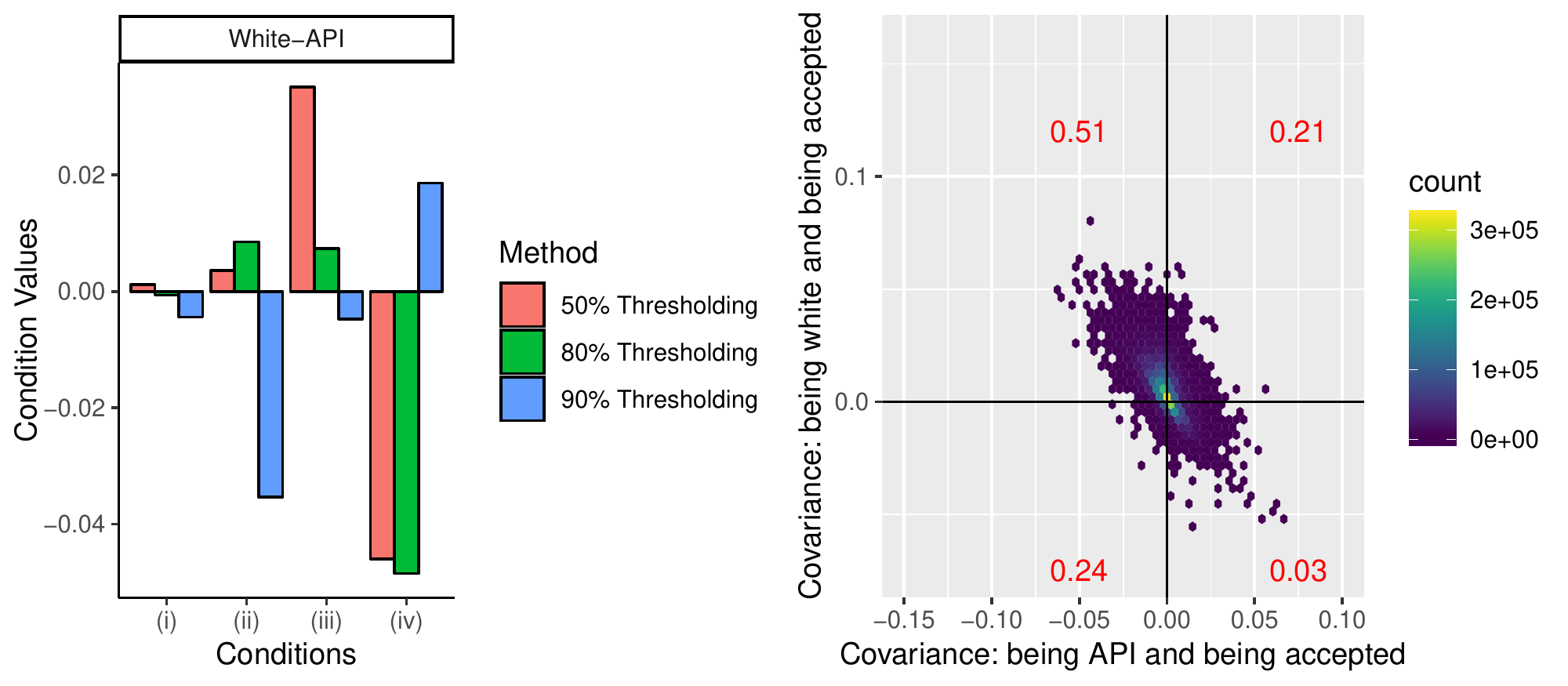}
  \caption{Left figure: the left hand side quantities in conditions \ref{eq:condition1}---\ref{eq:condition4} in Corollary \ref{corollary: demographic-est-race} when using thresholded estimator to estimate the demographic disparity with respect to White and API. Right figure: the population distribution across census tracts with different combinations of covariance terms in Theorem \ref{thm: weighted} with White as the advantaged group and API as the disadvantaged group.} \label{figure: condition_api}
\end{figure}

In Figure \ref{figure: mortgage_demo_api}, we can observe that the estimation bias is quite different when estimating the demographic disparity for White and API. The thresholded estimator could overestimate or underestimate the demographic disparity while the weighted estimator only very slightly underestimate the demographic disparity. This is not totally surprising because the API population have very different geolocation distribution and socioeconomic status distribution than Hispanic and Black. API population tend to mix with other races in terms of living locations and the average socioeconomic status disparity between API and White is much smaller than the average socioeconomic status disparity between other minority groups and White. Figure \ref{figure: condition_api} shows that the proportion of people who live in census tracts where API are accepted at lower average rate while White are accepted at higher average rate is much smaller than the proportion of people who live in census tracts where Black or Hispanic are accepted at lower average rate while White are accepted at higher average rate. This explains why the underestimation bias of the weighted estimator is very small.

\subsection{Correlation between the race probability and socioeconomic status}\label{section:correlation_income}
\begin{figure}[H]
  \centering
    \includegraphics[width=\linewidth]{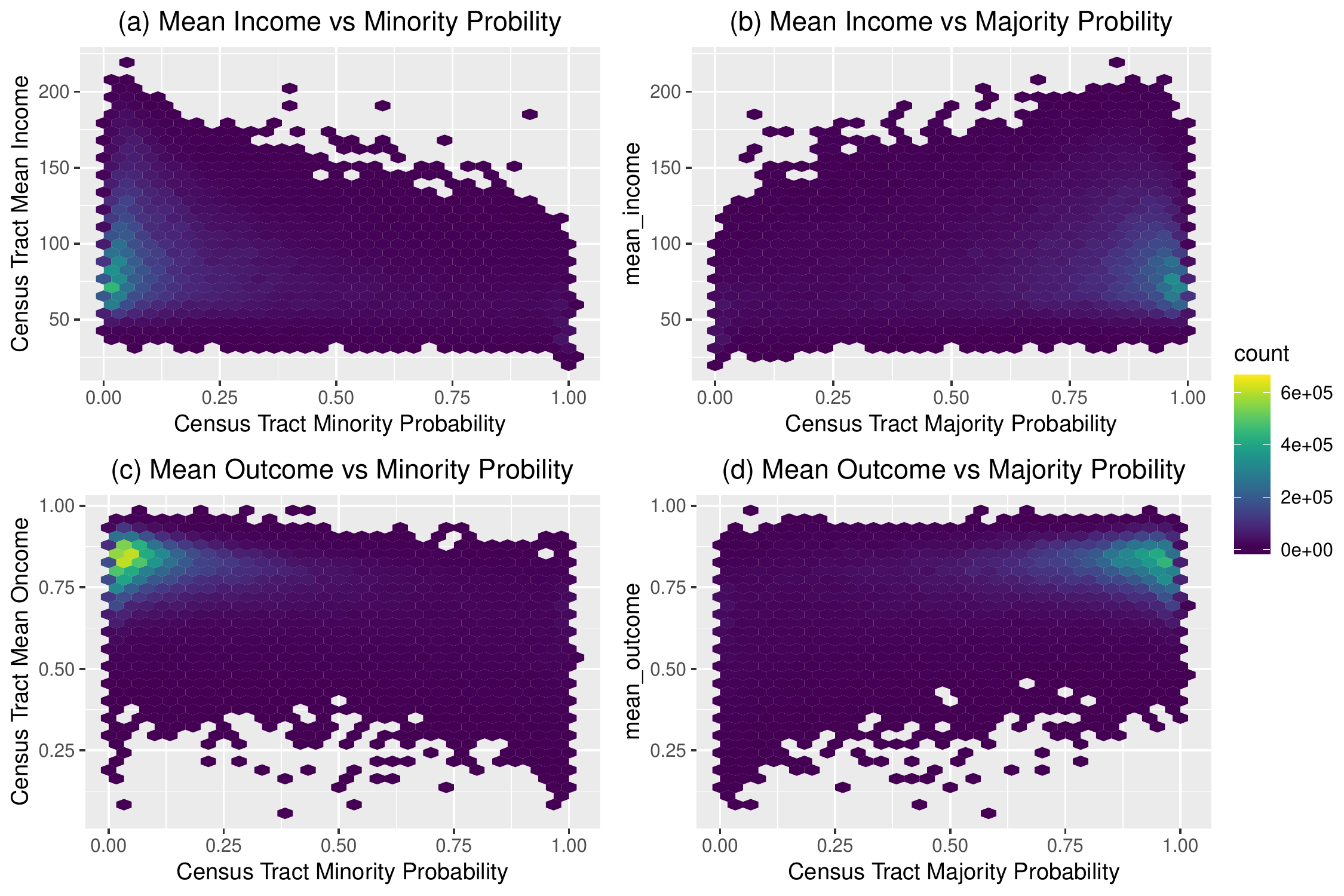}
  \caption{Figure (a) and (b) show the population distribution across census tracts with different mean yearly income versus different probability of belonging to the majority group and different probability of belonging to the minority groups. Figure (c) and (d) show the population distribution across census tracts with different average loan acceptance rate versus different probability of belonging to the majority group and different probability of belonging to the minority groups. The majority group here refers to White and the minority groups here refer to Black or Hispanic. } \label{figure: income_correlation}
\end{figure}

Figure \ref{figure: income_correlation} shows the correlation between the the race proxy probabilities and yearly income or average loan acceptance rate. We can clearly observe that census tracts with high White probability overall have more mass on higher income and higher loan acceptance rate while census tracts with high Hispanic or Black probability have more mass on lower income and lower loan acceptance rate. This figure validates the fact that the geolocation encodes socioeconomic status disparities such that the race proxy probabilities are correlated with socioeconomic status variables (e.g., income in this example) and the loan approval outcome. These correlations account for the conditions \eqref{eq:condition3}---\eqref{eq:condition4} in the Corollary \ref{corollary: demographic-est-race}.

\subsection{An example of unbiased weighted estimator} \label{section: unbiased_weighted}
\begin{figure}
  \centering
    \includegraphics[width=\linewidth]{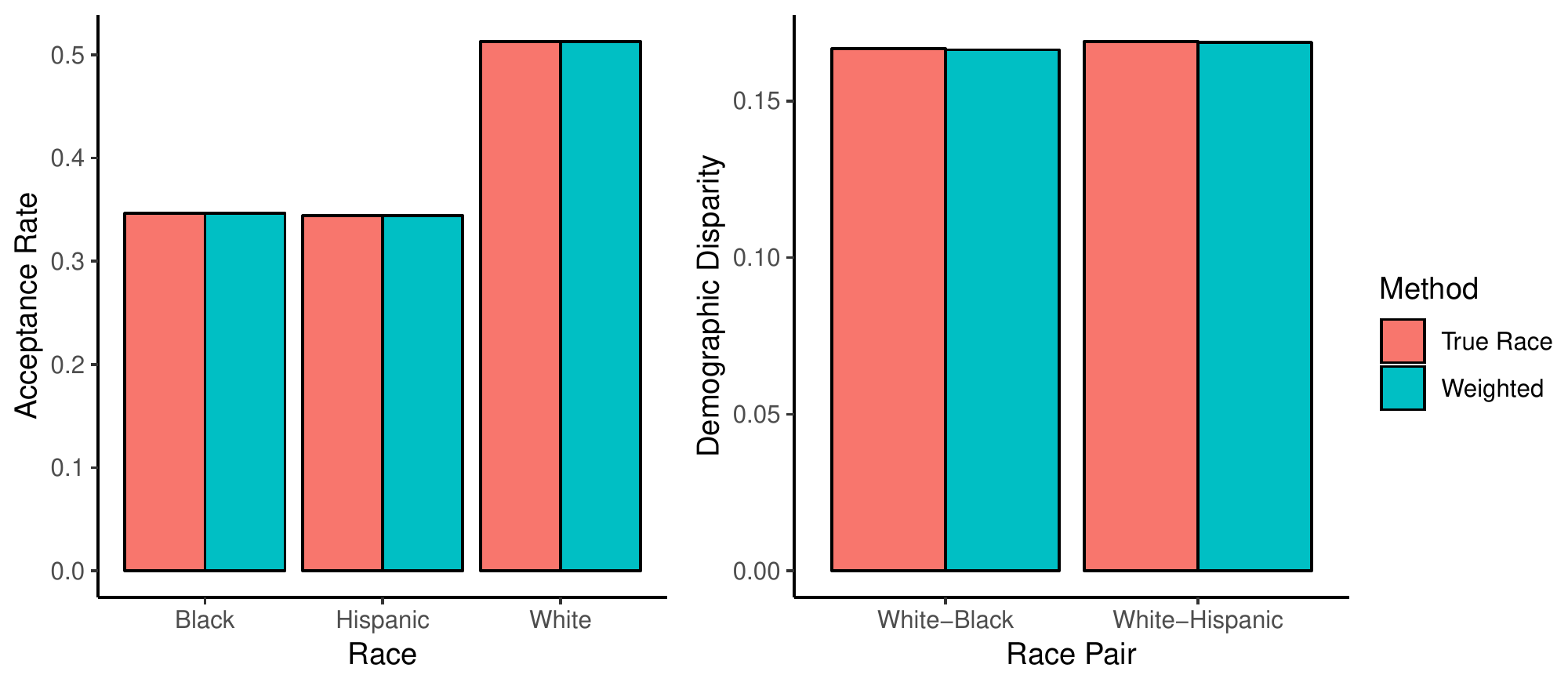}
  \caption{The loan acceptance rate and demographic disparity estimated by using the true race and by the weighted estimator for the semi-synthetic data where income is used to estimate race and it also determines the loan outcome. Weighted estimator is unbiased in this case because race is independent with the outcome conditionally on income by construction.} \label{figure: weighted_unbiased}
\end{figure}

To provide an example where the weighted estimator can be unbiased, we construct a semi-synthetic dataset based on the mortgage dataset. We aggregate the mortgage applicants' yearly income into deciles and use this discrete income as the predictor for race. We simulate the loan approval outcome $Y$ with $\pr (Y = 1 \mid Z = z) = 1/(1 + \exp(-(z - 5.5)))$, i.e., the loan approval outcome only depends on the discrete income $Z$. By construction, the loan approval outcome $Y$ is independent with race $A$ conditionally on $Z$, thus according to Theorem \ref{thm: weighted} the weighted estimator should be unbiased. This theoretical result is verified in Figure \ref{figure: weighted_unbiased}. This example shows that the weighted estimator with well constructed proxy model can estimate the demographic disparity perfectly.

\end{appendices}

\end{document}